
\documentclass[aps,prd,notitlepage,nofootinbib,superscriptaddress,groupedaddress]{revtex4-2}

\usepackage{mathrsfs}
\usepackage{bm}
\usepackage{amsmath}
\usepackage{amsthm}
\usepackage{amssymb}
\usepackage{graphicx}
\usepackage{esint}
\usepackage{hyperref}
\usepackage{tabularx}

\theoremstyle{plain}

\theoremstyle{plain}

\ifx\proof\undefined
\newenvironment{proof}[1][\protect\proofname]{\par
	\normalfont\topsep6\p@\@plus6\p@\relax
	\trivlist
	\itemindent\parindent
	\item[\hskip\labelsep\scshape #1]\ignorespaces
}{%
	\endtrivlist\@endpefalse
}
\providecommand{\proofname}{Proof}
\fi


\usepackage[table ]{ xcolor}

\usepackage{amssymb}
\usepackage{graphicx,color}

\usepackage{amsmath,amsthm}

\usepackage{mathrsfs}

\usepackage{bm}


\newtheorem{theorem}{Theorem}[section]

\newtheorem{lemma}[theorem]{Lemma}


\def\beq{\begin{equation}}
\def\eeq{\end{equation}}
\newcommand{\be}{\begin{eqnarray}}
\newcommand{\ee}{\end{eqnarray}}
\def\bi{\begin{itemize}}
\def\ei{\end{itemize}}
\def\ba{\begin{array}}
\def\ea{\end{array}}
\def\bfig{\begin{figure}}
\def\efig{\end{figure}}

\def\C{\mathbb{C}}
\def\R{\mathbb{R}}
\def\Z{\mathbb{Z}}

\newcommand{\bA}{{\bar{A}}}

\newcommand{\Su}{\mathrm{SU}(2)}

\def\be{\begin{eqnarray}}
\def\ee{\end{eqnarray}}

\newcommand{\cb}{\mathcal B}
\newcommand{\cc}{\mathcal C}
\newcommand{\cd}{\mathcal D}

\newcommand{\cf}{\mathcal F}

\newcommand{\ch}{\mathcal H}
\newcommand{\ci}{\mathcal I}
\newcommand{\cj}{\mathcal J}

\newcommand{\cl}{\mathcal L}
\newcommand{\cm}{\mathcal M}
\newcommand{\cn}{\mathcal N}
\newcommand{\co}{\mathcal O}
\newcommand{\calp}{\mathcal P}

\newcommand{\cu}{\mathcal U}

\newcommand{\cz}{\mathcal Z}

  \newcommand{\Fd}{\mathfrak{D}}

\newcommand{\fl}{\mathfrak{l}}  
  
\newcommand{\fn}{\mathfrak{n}}

  \newcommand{\Fs}{\mathfrak{S}}

\renewcommand{\a}{\alpha}
\renewcommand{\b}{\beta}
\newcommand{\g}{\gamma}
\newcommand{\G}{\Gamma}

\newcommand{\eps}{\varepsilon}

\newcommand{\sig}{\sigma}
\newcommand{\Sig}{\Sigma}
\renewcommand{\l}{\lambda}
\renewcommand{\L }{\Lambda}

\newcommand{\rmd}{\mathrm d}

\newcommand{\lt}{\left}
\newcommand{\rt}{\right}

\newcommand{\lag}{\left\langle}
\newcommand{\rag}{\right\rangle}

\newcommand{\bbc}{\mathbb{C}}

\newcommand{\Ar}{\mathrm{Ar}}

\newcommand{\re}{\mathrm{Re}}

\newcommand{\Tr}{\mathrm{Tr}}

\begin{document}

\title{Entanglement entropy in Loop Quantum Gravity and geometrical area law}

\author{Muxin Han}
\email{hanm(At)fau.edu}
\affiliation{Department of Physics, Florida Atlantic University, 777 Glades Road, Boca Raton, FL 33431, USA}
\affiliation{Institut f\"ur Quantengravitation, Universit\"at Erlangen-N\"urnberg, Staudtstr. 7/B2, 91058 Erlangen, Germany}



\begin{abstract}
The non-factorizing nature of the Hilbert space in Loop Quantum Gravity (LQG) due to gauge invariance requires a generalized definition of entanglement entropy. This work employs the framework of von Neumann algebras to investigate the entanglement entropy in LQG. On a graph, the holonomy and flux operators within a region and on the boundary generate a non-factor type I von Neumann algebra, which is used to define the entanglement entropy for LQG states. This algebraic formalism is applied to ``fixed-area states''--superpositions of spin networks associated with a surface with a definite macroscopic area given by the LQG area spectrum. By maximizing the entropy, we derive a geometrical area law where the entanglement entropy is proportional to the area. In addition, we show that bulk entanglement can renormalize the area-law coefficient and produce logarithmic corrections. The results in this paper closely relate to LQG black hole entropy. 
\end{abstract}

\maketitle

\tableofcontents

\section{Introduction}

The intersection of quantum gravity, information, and thermodynamics has revealed some of the deepest puzzles in modern theoretical physics \cite{Bekenstein:1973ur,Wald:1999vt,Jacobson:1995ab,Wald:1993nt}. A central concept that ties these fields together is entropy. The discovery that black holes possess a thermodynamic entropy proportional to their horizon area, known as the Bekenstein-Hawking entropy, suggested that spacetime geometry itself might have a microscopic quantum structure. This idea is further reinforced by the behavior of entanglement entropy in quantum field theory. For a quantum system in a pure state, the entanglement entropy of a subregion measures the degree of quantum correlation between the subregion and its complement. In many quantum field theories, this entropy is found to be divergent, with the leading divergence proportional to the area of the boundary separating the two regions \cite{Sorkin:2014kta,Callan:1994py}. This "area law" for entanglement entropy bears a striking resemblance to the black hole entropy formula, motivating the hypothesis that the entropy of a black hole might be understood, at least in part, as the entanglement entropy of quantum degrees of freedom across its horizon. This connection has become a cornerstone of the holographic principle and the bulk-boundary correspondence, where the Ryu-Takayanagi formula relates the bulk gravitational entropy to the entanglement entropy of the boundary conformal field theory \cite{Ryu:2006bv,Lewkowycz:2013nqa}.

However, to fully understand the origin of gravitational entropy, one needs a complete theory of quantum gravity. Loop Quantum Gravity (LQG) is a prominent candidate for such a theory, offering a non-perturbative and background-independent quantization of general relativity \cite{thiemann2008modern,rovelli2004quantum,rovelli2014covariant,han2007fundamental,ashtekar2004background}. In LQG, the fundamental quantum excitations of geometry are not point-like, but one-dimensional, polymer-like structures. The quantum states of space are described by spin networks \cite{rovelli1988knot,rovelli1995discreteness}, which are graphs with links decorated by representations of the gauge group SU(2) and nodes by invariant tensors (intertwiners). Geometric observables, such as area and volume, have discrete spectra, implying a granular structure of space at the Planck scale. The discreteness of the area spectrum, in particular, provides a natural starting point for a microscopic calculation of black hole entropy by counting the number of horizon configurations that yield a given macroscopic area \cite{Rovelli:1996dv,Ashtekar1998,GP2011,QGandBH,Agullo:2010zz,polymer}. It also leads to a derivation of the Ryu-Takayanagi formula in the framework of LQG \cite{HanHung}.

The existing studies of LQG entanglement entropy mostly focus on the spin-network states with fixed spins see e.g. \cite{Bianchi:2023avf,Feller:2017jqx,Bianchi:2018fmq,Hamma:2015xla,Donnelly:2008vx}. The entanglement entropy of these states does not exhibit the area law. Here, by area law, we mean that the entanglement entropy is proportional to the area of the entangling surface given by the area spectrum of LQG quantum geometry. We sometimes also refer to this as the \emph{geometrical} area law, in order to distinguish from the ``topological'' area law: the entanglement entropy is proportional to the number of links in the network intersecting the surface, in e.g. \cite{Bianchi:2018fmq,Pastawski:2015qua,Qi1}.

In this paper, we analyze the entanglement entropy of LQG states formed by summing spin-networks over different spins. We demonstrate that, for a wide class of such states, the entanglement entropy exhibits the geometrical area law. For states involving spin sums, defining and calculating entanglement entropy presents a fundamental challenge. When a spatial slice is partitioned into a region $A$ and its complement $\bar{A}$, the gauge invariance inherent to the theory leads to a non-factorizing Hilbert space. The degrees of freedom on the boundary $\partial A$—the spins $j_{\fl_0}$ on the links $\fl_0$ piercing the entangling surface—are shared between the two subregions. As a result, the total Hilbert space does not decompose into a simple tensor product $\mathcal{H}_A \otimes \mathcal{H}_{\bar{A}}$, but rather into a direct sum over ``superselection sectors'' labelled by the boundary data $\alpha\equiv \{j_{\fl_0}\}_{\fl_0}$: $\mathcal{H} = \bigoplus_\alpha (\mathcal{H}_{A,\alpha} \otimes \mathcal{H}_{\bar{A},\alpha})$. This structure invalidates the standard definition of entanglement entropy based on the partial trace. This issue is not unique to LQG but appears in any gauge theory.

In this situation, there is a more general, algebraic definition of entropy. The appropriate framework is that of von Neumann algebras \cite{Harlow:2016vwg,Sorce:2023fdx}. This paper employs this algebraic approach to define and compute entanglement entropy in LQG. Our proposal, detailed in Sections \ref{finitedimcase} and \ref{VNA Infinite-dimensional Hilbert space}, is the generalized formula for entanglement entropy applicable to non-factorizing Hilbert spaces:
\be
S(\bm{\rho},A)=-\sum_\alpha p_\alpha \log p_\alpha +\sum_\alpha p_\alpha S(\bm{\rho}_{A,\alpha})+\sum_\a p_\a\log\l_\a.\label{entropyformula0000}
\ee
In our discussion, this entropy formula is from a non-factor type-I von Neumann algebra generated by the holonomy and flux operators in LQG. The same formula has been used for studies on the entanglement entropy of gauge theories in e.g. \cite{Lin:2018bud,Bianchi:2024aim}. Here, $p_\alpha$ is the probability of the state $\bm\rho$ being in the sector $\alpha$, and $S(\bm{\rho}_{A,\alpha})$ is the standard von Neumann entropy of the reduced state within that sector. $\l_\a\in\Z_+$ counts the edge modes at the entangling surface. This formula beautifully decomposes the total entanglement into two physically distinct components: a "classical" Shannon entropy term, arising from the observer's ignorance about the state of the boundary, a quantum term, which is the average of the entanglement entropies of the bulk degrees of freedom within each sector, and an edge-mode term due to the observer's ignorance about the gauge invariance at the boundary. The Shannon term, which depends on the distribution of boundary states, plays a central role in the emergence of the area law.

An important step in justifying this proposal is to relate LQG to the formalism of von Neumann algebra, see \cite{DianaKaminski2011} for some early works along this direction. In Section \ref{VNA Infinite-dimensional Hilbert space}, we demonstrate that on a graph, the weak closure of the algebra generated by holonomy and flux operators within region $A$ (and on the boundary) is a non-factor type I von Neumann algebra identical to $\cm = \bigoplus_\alpha (\cb(\ch_{A,\alpha})\otimes \bm{I}_{\bA,\alpha})$, where $\cb(\ch_{A,\alpha})$ denotes the space of all bounded operators on $\ch_{A,\alpha}$. This result provides a rigorous foundation for applying the entropy formula \eqref{entropyformula0000} to LQG. With this framework in place, this paper presents the following main results:

\begin{enumerate}
    \item We introduce a class of "fixed-area states," which are quantum states representing a closed surface $\Fs$ with a definite macroscopic area $\Ar(\Fs)$ given by the LQG area spectrum. The state can have arbitrarily many links piercing $\Fs$. For such states with no entanglement between the bulk degrees of freedom on either side of the boundary, the entanglement entropy is given purely by the Shannon term. Maximizing this entropy leads to a result given by counting microscopic states on the surface $\Fs$ similar to the LQG black hole entropy computation. As a result, the entanglement entropy is directly proportional to the area of the entangling surface. This yields a geometrical area law, $S \propto \mathrm{Ar}(\Fs)$, where the coefficient of proportionality is calculated explicitly. This derivation provides a direct link between the quantum geometry of LQG and the area-law behavior of entanglement entropy. Our result suggests a relation between the entanglement of LQG states and the classical geometry. This quantum-to-classical relation is independent of the large-$j$ limit, which is often used in the semiclassical analysis of LQG. Indeed, the area-law entropy is dominated by contributions from small $j$'s but large number of punctures on $\Fs$.
    
    \item We investigate the contribution of entanglement between the bulk degrees of freedom. When such entanglement is introduced, the quantum term in the entropy formula becomes non-zero. For certain classes of states where the bulk entanglement can be factored per puncture, its effect is to introduce a spin-dependent degeneracy factor $G(j)$ into the microstate counting. This leads to a "renormalization" of the coefficient of the area law. This mechanism may be viewed as analogous to the renormalization of Newton's constant by quantum field entanglement in the vicinity of a black hole horizon \cite{Larsen:1995ax,Kabat:1995eq,Jacobson:1994iw}.

    \item We show that more complex correlations in the bulk can lead to logarithmic corrections to the area law. By modeling the bulk state in region $A$ as a single large intertwiner, we find the entropy takes the form $S = \beta \mathrm{Ar}(\Fs) +c\log(\mathrm{Ar}(\Fs)) + \dots$, where the coefficient of the area term is renormalized and a specific logarithmic correction appears. The resulting $\b$ and $c$ matches the known result of SU(2) black hole entropy \cite{Agullo:2009eq,Engle2011}, solidifying the connection between entanglement and black hole thermodynamics in LQG. 
    
\end{enumerate}

The paper is organized as follows. Section \ref{The Hilbert space} reviews the LQG Hilbert space on a graph and its non-factorizing structure. Section \ref{finitedimcase} introduces the von Neumann algebra formalism, focusing on the finite-dimensional case and deriving the entropy formula \eqref{entropyformula0000} for $\l_\a=1$. In Section \ref{Fixed-area state and geometrical area law}, we analyze the fixed-area states to derive the area law. Section \ref{A general framework of state-counting} presents a general mathematical scheme for the state-counting, based on inverse Laplace transform techniques. Section \ref{Bulk entropy} explores the effects from bulk entanglement. Section \ref{VNA Infinite-dimensional Hilbert space} generalizes the algebraic framework to the infinite-dimensional case, establishing its validity for the full theory and relating $\l_\a$ to the trace normalization of von Neumann algebra, as well as discussing the physical meaning.








\section{The Hilbert space}\label{The Hilbert space}

The analysis in this paper is based on a closed nontrivial graph $\Gamma_0$ embedded in the compact 3d space $\Sigma$. The graph $\G_0$ contains finite numbers of oriented links $\fl$ and nodes $\fn$. All nodes have their valences greater than 2. The quantum state of LQG along each link $\fl$ belongs to the Hilbert space 
\be
\ch_\fl\simeq L^2(\mathrm{SU}(2),\rmd\mu_H)\simeq \bigoplus_{j\in\mathbb{N}_0/2}\lt(V_j\otimes V_j^*\rt)
\ee
where $j=0,1/2,1,\cdots$ denotes the SU(2) spin. We associate $V_j$ to the source of the link and associate $V_j^*$ to the target. The LQG quantum state on $\G_0$ belongs to the tensor product
\be
\ch_{\G_0}^{\rm (aux)}:=\bigotimes_{\fl\subset\G_0}\ch_\fl\simeq \bigoplus_{\{j_\fl\}}\lt[\bigotimes_{\fn\in\G_0}\widetilde{\ch}_\fn\lt(\{j_\fl\}\rt)\rt],\qquad \widetilde{\ch}_\fn\lt(\{j_\fl\}\rt)=\bigotimes_{\fl,\ \fn=s(\fl)} V_{j_\fl}\otimes \bigotimes_{\fl',\ \fn=t(\fl')} V^*_{j_{\fl'}}.\label{htilde}
\ee
The SU(2) gauge invariance projects each $\widetilde{\ch}_\fn\lt(\{j_\fl\}\rt)$ to the invariant subspace
\be
\ch_\fn\lt(\{j_\fl\}\rt)=\mathrm{Inv}_{\Su}\lt(\bigotimes_{\fl,\ \fn=s(\fl)} V_{j_\fl}\otimes \bigotimes_{\fl',\ \fn=t(\fl')} V^*_{j_{\fl'}}\rt).
\label{eq:Hn}
\ee
which is also called the space of SU(2) intertwiners $i_\fn$ at the node $\fn$. The intertwiner space is 1-dimensional: $\ch_\fn\simeq \C$, for $\fn$ whose adjacent links all have trivial spins. The intertwiner space $\ch_\fn$ is 0-dimensional if the spins on the adjacent links do not satisfy the triangle inequalities. 

After the gauge invariant projection, the Hilbert space $\ch_{\G_0}$ of SU(2) gauge invariant states is defined by
\be
\ch_{\Gamma_0} \simeq \bigoplus_{\{j_\fl\}}\lt[ \bigotimes_{\fn\in\G_0 } \ch_\fn\lt(\{j_\fl\}\rt)\rt].\label{directsum}
\ee
The direct sum contains the trivial spins $j_\fl=0$. A state with $j_\fl=0$ at some links $\fl$ is equivalent to a state with non-trivial spins on a smaller graph $\G\subset\G_0$. $\ch_{\G_0}$ is spanned by the finite linear combinations of spin-networks labelled by $(\G,\{j_\fl\},\{i_\fn\})$ for all $\G\subseteq\G_0$ and all spins $j_\fl$ and intertwiners $i_\fn$ on $\G$.


We divide the spatial slice $\Sig$ into subregions $A$ and $\bA$, and we will consider the entanglement of states in $\ch_{\G_0}$ according to this division. The interface $\Fs$ between $A$ and $\bA$ is assumed to be a closed surface.

In this work, we assume that the surface $\Fs$ intersects with a number of links of the graph $\G_0$, whereas $\Fs$ does not contain any node in $\G_0$. The links that intersects with $\Fs$ are denoted by $\fl_0$. The intersection between a link $\fl_0$ and $\Fs$ is called a puncture. We denote by $\a$ a profile of spins on all intersected links $\fl_0$, i.e. $\a \equiv \{j_{\fl_0}\}_{\fl_0\cap \Fs\neq \emptyset}$. Note that $j_{\fl_0}=0$ is allowed in $\a$.

If we fix a spin profile $\{j_\fl\}$ on all links of the entire $\G_0$, $\{j_\fl\}$ determines $\a$ on $\{\fl_0\}$, the corresponding subspace $\bigotimes_{\fn\in\G_0 } \ch_\fn\lt(\{j_\fl\}\rt)\subset \ch_\G$ can be factorized into $A$ and $\bA$
\be
\ch_{A}\lt(\a, \{j_\fl\}_{\fl\subset A}\rt) \otimes \ch_{\bA  }\lt(\a, \{j_\fl\}_{\fl\subset \bA}\rt)
\ee
where $\ch_{A}\lt(\a, \{j_\fl\}_{\fl\subset A}\rt) $ and $ \ch_{\bA  }\lt(\a, \{j_\fl\}_{\fl\subset \bA}\rt)$ associate to the subregions $A$ and $\bA$ respectively:
\be
\ch_{A}\lt(\a, \{j_\fl\}_{\fl\subset A}\rt) = \bigotimes_{\fn\in\G_0 \cap A } \ch_\fn\lt(\{j_\fl\}\rt),\qquad
\ch_{\bA  }\lt(\a, \{j_\fl\}_{\fl\subset \bA}\rt)=\bigotimes_{\fn\in\G_0 \cap \bA } \ch_\fn\lt(\{j_\fl\}\rt).
\ee
where $\{j_\fl\}$ on the right hand sides coincide with $\a$ for $\fl=\fl_0$.

However, the factorization of Hilbert space into degrees of freedoms of $A$ and $\bA$ cannot be achieved at the level of $\ch_{\G_0}$ with the spins summed over. The factorization only holds when the spins on $\{\fl_0\}$ are fixed. Indeed, the direct sum over spins in \eqref{directsum} can be split into the sums over the spins inside $A$, $\bA$ and over $\a$ on $\{\fl_0\}$:
\be
\ch_{\G_0}=\bigoplus_{\a }\lt(\ch_{A,\a} \otimes \ch_{\bA,\a }\rt),\label{chPj}
\ee
where
\be
\ch_{A,\a }=\bigoplus_{\{j_\fl\}_{\fl\subset A}}\ch_{A}\lt(\a, \{j_\fl\}_{\fl\subset A}\rt),\qquad \ch_{\bA,\a}=\bigoplus_{\{j_\fl\}_{\fl\subset \bA}}\ch_{\bA}\lt(\a, \{j_\fl\}_{\fl\subset \bA}\rt).
\ee
The sum of $\{j_\fl\}$ in $\ch_{A,\a }$ and $\ch_{\bA,\a } $ is only over the internal $j_\fl$ inside $A$ and $\bA$ and excludes $j_{\fl_0}$. The intertwiner space $\ch_{\fn}(\{j_\fl\})$ in $\ch_{A,\a}$ or $\ch_{\bA,\a}$ for $\fn$ adjacent to $\Fs$ has the fixed spins $j_{\fl_0}$ for $\fl_0$ intersecting $\Fs$.


Once the sum over spins are taken into account, the Hilbert space $\ch_{\G_0}$ is not a tensor product of $\ch_A$ and $\ch_{\bA}$, but a direct sum of tensor products. The usual definition of the reduced density matrix and entanglement entropy for a space region $A$ requires the factorization Hilbert space $\ch\simeq \ch_A\otimes \ch_\bA$, so that the partial trace on $\ch_{\bA}$ can be defined. However, the LQG Hilbert space clearly does not satisfy this requirement. Therefore, the usual notion of entanglement entropy needs to be generalized in order to be applied to LQG. The generalization uses the language of non-factor von Neumann algebra and will be discussed in Section \ref{finitedimcase} and the following sections (see also e.g. \cite{Colafranceschi:2023moh,Harlow:2016vwg}).

The non-factorization of the Hilbert space is a consequence of the gauge invariance on the surface $\Fs$: Consider a link $\fl_0$ cut by $\Fs$ into a pair of half-edges. The gauge invariance at the added node in between the half-edges constrains the spins on the half-edges to be the same. Then $j_{\fl_0}$ becomes a degree of freedom shared by both $A$ and $\bA$. If a Hilbert space can be factorized, all degrees of freedom has to either belong to $A$ or belong to $\bA$, but this contradicts with the property of $j_{\fl_0}$. We will come back to the discussion of this aspect in Section \ref{VNA Infinite-dimensional Hilbert space}.

In the following, we often denote by $\calp$ the set of punctures on $\Fs$ carrying nontrivial spins and denote by $\cj=\{j_{\fl_0}\neq 0\}_{\fl_0\cap\Fs\neq \emptyset}$ the profile of nontrivial spins on $\calp$. Given the graph $\G_0$, we have the equivalence $\a=(\calp,\cj)$.

The full LQG kinematical Hilbert space extends $\ch_{\G_0}$ by including all graphs $\G$ in addition to the ones included by $\G_0$. However, in this paper, we focus on a fixed graph $\G_0$ and the Hilbert space $\ch_{\G_0}$. In the following, we neglect the label and denote by $\ch=\ch_{\G_0}$.



\section{Von Neumann algebra and entanglement entropy: Finite-dimensional Hilbert space}\label{finitedimcase}

In this section, we discuss the entanglement entropy from the viewpoint of von Neumann algebra on the Hilbert space of the type 
\be
\hat{\ch}\simeq \bigoplus_{\alpha}\lt(\hat{\ch}_{A,\alpha } \otimes \hat{\ch}_{\bA,\alpha }\rt).\label{decomposition1}
\ee
For simplifying the discussion, we first assume $\hat{\ch}$, $\hat{\ch}_{A,\alpha }$, $\hat{\ch}_{\bA,\alpha }$ are all finite-dimensional in this section and postpone the discussion for the infinite-dimensional Hilbert space to Section \ref{VNA Infinite-dimensional Hilbert space}. In the case of \eqref{chPj}, the finite-dimensional Hilbert space corresponds to imposing truncations to the spin-sums for all $\fl$. The maximum of spins may have the physical meaning as implementation of cosmological constant: $j_{\rm max}\sim \L^{-1}/(8\pi\g \ell_P^2)$. 



Given that $\hat{\ch}$ is a finite-dimensional Hilbert space and $\cl(\hat{\ch})$ is the space of linear operators on $\hat{\ch}$. A von Neumann algebra $\cm\subset\cl(\hat{\ch})$ is a subalgebra of operators which is closed under hermitian conjugation and contains the identity operator. The algebra $\cm$ may have the commutant $\cm'\subset\cl(\hat{\ch})$ whose operators commute with all operators in $\cm$. $\cz_\cm=\cm\cap\cm'$ is the center of $\cm$. A von Neumann algebra on a Hilbert space is called a factor if its center contains only scalar multiples of identity operator. On finite-dimensional Hilbert space, if a von Neumann algebra is a factor, then it is of type I (see e.g. \cite{Sorce:2023fdx} for the classification of von Neumann algebra factors).

Given any von Neumann algebra $\cm$ on $\hat{\ch}$, which is not necessary a factor, there exits a block decomposition $\hat{\ch}=\oplus_\alpha (\hat{\ch}_{A,\alpha}\otimes \hat{\ch}_{\bA,\alpha})$, such that $\cm=\oplus_\alpha (\cl(\hat{\ch}_{A,\alpha})\otimes \bm{I}_{\bA,\alpha})$ and $\cm'=\oplus_\alpha ( \bm{I}_{A,\alpha}\otimes\cl(\hat{\ch}_{\bA,\alpha}))$ \cite{Harlow:2016vwg}. When $\cm$ is a factor, the direct sum of $\alpha$ becomes trivial.

In the case of LQG, we make this block decomposition coincide with the decomposition of $\ch$ in \eqref{chPj} by identifying the von Neumann algebra $\cm $ to be $\oplus_\alpha (\cl(\hat{\ch}_{A,\alpha})\otimes I_{\bA,\alpha})$ with $\alpha=(\calp,\cj)$ and $\hat{\ch}_{A,\alpha}$ being the truncation of $\ch_{A,(\calp,\cj)}$ in \eqref{chPj} (by the cut-off of the spins). It is clear that $\cm$ defined in this way contains the SU(2) gauge invariant operators made by the holonomies and fluxes in region $A$ which are
\begin{itemize}

\item The holonomy operators $\bm{h}^{(s)}_\fl$ for $\fl\subset A$ and importantly $\fl\neq \fl_0$, i.e. The holonomies do not intersect with $\Fs$. $s\in\mathbb{N}/2$ denotes the representation of the holonomy.

\item The flux operator that gives the angular momentum operators $\vec{\bm{R}}_\fl,\vec{\bm{L}}_\fl$ acting on $V_{j_\fl}$ and $V_{j_\fl}^*$ for $\fl\subset A$ and $\fl\neq \fl_0$.

\item The angular momentum operator $\vec{\bm{J}}_{\fl_0}=\vec{\bm{R}}_{\fl_0}$ or $\vec{\bm{L}}_{\fl_0}$ acting on $V_{j_{\fl_0}}$ or  $V_{j_{\fl_0}}^*$ for $\fl_0\cap \Fs\neq \emptyset$, where $V_{j_{\fl_0}}$ or  $V_{j_{\fl_0}}^*$ belongs to an intertwiner space in $\ch_{A,\alpha}$. In other words, $\vec{\bm{J}}_{\fl_0}$ acts on the open legs of the intertwiners inside $A$. 

\end{itemize}
The gauge invariance is implemented at the nodes $\fn$ inside $A$. Given any gauge invariant operators $\bm{\co}$ made by holonomies and fluxes, the corresponding element in $\cm$ is given by $\hat{\bm{p}}\bm{\co}\hat{\bm{p}}$, where $\hat{\bm{p}}$ is the projection from $\ch$ to the truncated Hilbert space $\hat{\ch}$. In Section \ref{VNA Infinite-dimensional Hilbert space}, we will show on the infinite-dimensional Hilbert space $\ch$ that the algebra generated by the holonomies and fluxes in $A$ is isomorphic to the von Neumann algebra $\oplus_\alpha (\cl({\ch}_{A,\alpha})\otimes I_{\bA,\alpha})$. In addition, the area operator associated to every puncture, $\bm{\Ar}_{\fl_0}=8\pi\g\ell_P^2\sqrt{\vec{\bm{J}}_{\fl_0}\cdot \vec{\bm{J}}_{\fl_0}}$, belongs to the center $\cz_\cm$.

Given any density matrix $\bm{\bm{\rho}}\in\cl(\hat{\ch})$ normalized by $\Tr(\bm{\bm{\rho}})=1$ and a von Neumann algebra $\cm$ on $\hat{\ch}$, there exists a unique Hermitian, non-negative, normalized $\bm{\bm{\rho}}_\cm\in\cm $ such that $\Tr(\bm{\bm{\rho}} \bm{\co})=\Tr(\bm{\bm{\rho}}_\cm \bm{\co})$ for all $\bm{\co}\in \cm$ (see e.g. \cite{Harlow:2016vwg} for a proof of this statement). Explicitly, $\bm{\bm{\rho}}$ can be written as blocks $\bm{\bm{\rho}}_{\a\a'}$ according to \eqref{decomposition1}. Let us only focus on the diagonal block $\bm{\bm{\rho}}_{\a\a}$, since only the diagonal blocks contributes to the expectation values $\Tr(\bm{\bm{\rho}} \bm{\co})$ of the operators $\bm\co\in \cm$. All operators $\bm{\co}\in \cm$ are block diagonal since $\cm=\bigoplus_\alpha \lt(\mathcal{L}(\hat{\ch}_{A,\alpha})\otimes \bm{I}_{\bA,\alpha}\rt)$. Each diagonal block $\bm{\bm{\rho}}_{\a\a}$ is on the factorized subspace $\hat{\ch}_{A,\a}\otimes\hat{\ch}_{\bA,\a}$. We take the partial trace on $\hat{\ch}_{\bA,\alpha}$ and define
\be
 \bm{\rho}_{A,\a}=p_\alpha^{-1}\Tr_{\bA,\alpha}(\bm{\rho}_{\a\a})
\ee
where $0<p_\alpha<1$ is the normalization constant such that $\Tr_{A,\a}( \bm{\rho}_{A,\a})=1$. Then $\bm{\rho}_\cm$ is given by
\be
\bm{\rho}_\cm=\bigoplus_\alpha \lt(p_\alpha \bm{\rho}_{A,\a}\otimes \frac{\bm{I}_{\bA,\alpha}}{\dim(\hat{\ch}_{\bA,\alpha})} \rt).\label{rhoM}
\ee
where $I_{\bA,\a}$ is the identity operator on $\ch_{\bA,\alpha}$. This follows from the uniqueness of $\bm{\rho}_{\cm}$ and the fact that the above expression of $\bm{\rho}_\cm$ satisfies all the required properties (i.e. Hermitian, non-negative, normalized, and  $\Tr(\bm{\rho} \bm{\co})=\Tr(\bm{\rho}_\cm \bm{\co})$).

When the direct sum of $\alpha$ is trivial, i.e. $\hat{\ch}$ factorizes as $\hat{\ch}_A\otimes\hat{\ch}_\bA$, the state $\bm{\rho}_{\cm}$ relates to the reduced density matrix $\bm{\rho}_A$ by $\bm{\rho}_\cm=\bm{\rho}_A\otimes I_{\bA}/\dim(\ch_{\bA})$, and the von Neumann entropy is given by $-\Tr_A(\bm{\rho}_A\log\bm{\rho}_A)$. However, there is no notion of $\bm{\rho}_A$ and $\Tr_A$ from the non-factorized $\hat{\ch}$ in \eqref{decomposition1}. The idea is to define the entropy with $\bm{\rho}_\cm$ instead of $\bm{\rho}_A$. We also need to define a renormalized trace on $\cm$ \cite{Harlow:2016vwg,ohya2004quantum}. Let us firstly consider the case of trivial sum over $\alpha$ ($\cm$ is a factor) then generalize from there: For $\hat{\ch}=\hat{\ch}_A\otimes\hat{\ch}_\bA$, we define the renormalized trace $\hat{\Tr} = \frac{1}{\dim(\hat{\ch}_\bA)}\Tr$. For any operator $\bm{\co}\in\cm$, we have $\bm{\co}=\bm{\co}_A\otimes \bm{I}_\bA$, where $\bm{\co}\in\cl(\hat{\ch}_A)$ so $\hat{\Tr}(\bm{\co})=\Tr_A(\bm{\co}_A)$. Since $\hat{\ch}$ is finite-dimensional and $\cm$ is type-I, the minimal projection in $\cm$ is given by $\bm{P}=\bm{P}_{A}\otimes \bm{I}_{\bA}$, where $\bm{P}_{A}$ is a projection onto 1-dimensional subspace of $\hat{\ch}_{A}$. The above renormalized trace satisfies $\hat{\Tr}(\bm{P})=1$. We also renormalize $\bm{\rho}_\cm$ by $\hat{\bm{\rho}}_\cm=\dim(\hat{\ch}_\bA){\bm{\rho}}_\cm=\bm{\rho}_A\otimes \bm{I}_\bA$ so that $\hat{\Tr}(\hat{\bm{\rho}}_\cm\bm{\co})=\Tr(\bm{\rho}_\cm\bm{\co})=\Tr_A(\bm{\rho}_A\bm{\co}_A)$. The entanglement entropy is defined by $S(\bm{\rho},A):=-\hat{\Tr}(\hat{\bm{\rho}}_\cm\log \hat{\bm{\rho}}_\cm)$, which reduces to the standard definition of von Neumann entropy $- {\Tr}({\bm{\rho}}_A\log {\bm{\rho}}_A)$.

When the sum of $\alpha$ is nontrivial ($\cm$ is not a factor), The renormalized trace $\hat{\Tr}$ on $\hat{\ch}=\oplus_\a(\hat{\ch}_{A,\a}\otimes\hat{\ch}_{\bA,\a})$ is defined by (see e.g. \cite{Harlow:2016vwg})
\be
\hat{\Tr}:=\bigoplus_\alpha\frac{1}{\dim(\hat{\ch}_{\bA,\alpha})}\Tr\ ,\label{renTracefinite}
\ee
Any operator $\bm{\co}\in\cm$ is given by $\bm{\co}=\oplus_\alpha (\bm{\co}_{A,\alpha}\otimes\bm{I}_{\bA,\alpha})$ where $\bm{\co}_{A,\alpha}\in\cl(\hat{\ch}_{A,\alpha})$, then the action of $\hat{\Tr}$ is
\be
\hat{\Tr}(\bm{\co})=\sum_\alpha \Tr_{A,\alpha }(\bm{\co}_{A,\alpha}).
\ee
We have again $\hat{\Tr}(\bm{P}_\a)=1$ for any minimal projection $\bm{P}_\a\in \cm$, where $\bm{P}_\a=\bm{P}_{A,\a}\otimes\bm{I}_{\bA,\a}$ for some $\alpha$ and the projection $\bm{P}_{A,\a}$ onto 1-dimensional subspace of $\hat{\ch}_{A,\a}$. The trace $\hat{\Tr}$ satisfies the property $\hat{\Tr}(\bm{x}\bm{y})=\hat{\Tr}(\bm{y}\bm{x})$ for all $\bm{x},\bm{y}\in\cm$. We also renormalize $\bm{\rho}_\cm$ in \eqref{rhoM} by 
\be
\hat{\bm{\rho}}_\cm:=\bigoplus_\alpha \lt(p_\alpha \bm{\rho}_{A,\a}\otimes  {\bm{I}_{\bA,\alpha}} \rt),
\ee
so that $\hat{\Tr}(\hat{\bm{\rho}}_\cm)=\Tr(\bm{\rho}_\cm)=1$ and for any $\bm \co\in \cm$, we have $\hat{\Tr}(\hat{\bm{\rho}}_\cm\bm \co)=\Tr(\bm{\rho}_\cm\bm \co)$. The entanglement entropy is defined by
\be
S(\bm{\rho},A):=-\hat{\Tr}\lt(\hat{\bm{\rho}}_\cm\log\hat{\bm{\rho}}_\cm\rt)=-\sum_\alpha p_\alpha \log p_\alpha +\sum_\alpha p_\alpha S(\bm{\rho}_{A,\alpha}).\label{entropyformula}
\ee
The first term of $S(\bm{\rho},A)$ is the Shannon entropy of the probably distribution $p_\alpha$, while the second term averages the von Neumann entropies over all sectors $\alpha$. This formula of entanglement entropy is defined for states $\bm{\rho}$ on a Hilbert space that is not factorized as tensor product, while it clearly reduces to the standard von Neumann entropy in the case that the Hilbert space factorizes. 


The entropy formula \eqref{entropyformula} is proposed in \cite{Harlow:2016vwg} for the Hilbert space of quantum error correction code. Some basic properties of $S(\bm{\rho},A)$ can also be found in \cite{Harlow:2016vwg}. The same entropy formula is obtained independently in \cite{Hamma:2015xla} from LQG perspective. See also e.g. \cite{Bianchi:2019stn,Bianchi:2021aui} for some applications of this formula. As we are going to see in the following, the Shannon-entropy term in $S(\bm{\rho},A)$ closely relates to the geometrical area law for LQG states.

Note that the definition of the renormalized trace \eqref{renTracefinite} has the freedom of rescaling the trace at each $\a$ by a positive constant $\l_\a$. The trace of the minimal projection becomes $\hat{\Tr}(\bm{P}_\a)=\l_\a$. In the following, we proceed our discussion with the definition \eqref{renTracefinite} and discuss this ambiguity in Section \ref{VNA Infinite-dimensional Hilbert space}.

\section{Fixed-area state and geometrical area law}\label{Fixed-area state and geometrical area law}

\subsection{Fixed-area states}\label{Fixed-area state}

Recall the Hilbert space 
\be
\ch=\bigoplus_{\a}\lt(\ch_{A,\a } \otimes \ch_{\bA,\a  }\rt),\qquad \a=(\calp,\cj)\label{chPj11}
\ee
for LQG. $\calp$ denotes the set of punctures with nontrivial spins on the surface $\Fs$, and $\cj=\{j_{\fl_0}\neq0\}_{\fl_0\cap \Fs\in\calp}$ denotes the profile of spins on the punctures. In this section, we discuss a class of pure states $\bm{\rho}=|\Psi_{\Ar(\Fs)}\rangle\langle\Psi_{\Ar(\Fs)}|$, where $\Psi_{\Ar(\Fs)}\in{\ch}$ endows the area $\Ar(\Fs)$ to $\Fs$ by approximating the eigenstate of the LQG area operator associated to $\Fs$. We are going to show that each of these states gives the entanglement entropy $S(\bm{\rho},A)$ proportional to $\Ar(\Fs)$.

Let us first consider the family of states $\Psi_{\Ar(\Fs)}\in \ch$ satisfying the following conditions:

\begin{enumerate}

\item $\Psi_{\Ar(\Fs)}$ sums the states with different sets $\calp$ of punctures on $\Fs$. Any two different $\calp$ in the sum have different number of nontrivial punctures. We denote by $N$ the number of punctures $|\calp|$. $N\gg1$ is always assumed.

\item $\Psi_{\Ar(\Fs)}$ is a linear combination of the pure tensor-product states in $\ch_{A,\a } \otimes \ch_{\bA,\a  }$, i.e. there is no entanglement of bulk degrees of freedom (e.g. spins and intertwiners inside $A$ and $\bA$) between $A$ and $\bA$ at each $\a$-sector.

\item $\Psi_{\Ar(\Fs)}$ approximates an eigenstate of the area operator associated to $\Fs$ by summing over $(\calp,\cj)$ satisfying
\be
\Ar(\Fs)-\Delta\leq 8\pi\g\ell_P^2\sum_{\fl_0=1}^N\sqrt{j_{\fl_0}(j_{\fl_0} +1)}\leq \Ar(\Fs)+\Delta,\label{areashell}
 \ee
where the width $\Delta$ is of the order $\ell_P^2$.

\end{enumerate}

The states satisfying the above conditions can be written generally as
\be
\Psi_{\Ar(\Fs)}&=&\sum_{\alpha=(N,\cj)} p_\alpha^{1/2} \lt(\psi_{A,\alpha}\otimes\tilde{\psi}_{\bA,\alpha}\rt)\chi_{[-\Delta,\Delta]}\lt(C_1(\a)\rt),\label{Psisum1}\\
C_1&=&8\pi \ell_P^2\g\sum_{\fl_0=1}^N\sqrt{j_{\fl_0}(j_{\fl_0} +1)}-{\Ar(\Fs)}.
\ee
where $p_\alpha>0$. $\chi_{[-\Delta,\Delta]}(x)=1$ for $x\in [-\Delta,\Delta]$ and vanishes elsewhere. Both $\psi_{A,\alpha} $ and $ \tilde{\psi}_{\bA,\alpha}$ are normalized in $\ch_{A,\alpha}$ and $\ch_{\bA,\alpha}$, so $\Vert\Psi_{\Ar(\Fs)}\Vert=1$ implies $\sum_\alpha p_\alpha=1$. The sum over $\calp$ becomes the sum over the number $N$ of punctures due to the first condition. There are only finitely many nonzero terms in the sum due to the bound of area, so the state may be understood as living in a finite-dimensional truncated Hilbert space $\hat{\ch}$.

\subsection{Maximizing the entanglement entropy and microstate-counting}

Given $\bm{\rho} =|\Psi_{\Ar(\Fs)}\rangle\langle\Psi_{\Ar(\Fs)} |$, the diagonal block $\bm{\rho}_{\a\a}$ gives
\be
\Tr_{\bA,\alpha}(\bm{\rho}_{\a\a})=p_\alpha \bm{\rho}_{A,\alpha},\qquad \bm{\rho}_{A,\alpha}=|\psi_{A,\alpha}\rangle\langle\psi_{A,\alpha}|,\qquad S(\bm{\rho}_{A,\alpha})=0,
\label{eq:rho_pure}
\ee
for all $\a = (N,\cj)$. Therefore, the entanglement entropy $S(\bm{\rho},A)$ only receive the contribution from the Shannon term:
\be
S(\bm{\rho},A)=-\sum_{\alpha=(N,\cj)}  p_\alpha \log p_\alpha,\label{entropyplogp}
\ee
where the sum is constrained by \eqref{areashell}. 

In the family of states \eqref{Psisum1} specified by the conditions 1-3, we look for the state that maximizes the entanglement entropy. The entropy $S(\bm{\rho},A)$ in \eqref{entropyplogp} only depends on the coefficients $p_\a$ and the total area $\Ar(\Fs)$ of the state. If we fix the total area $\Ar(\Fs)$, the maximization can be computed by the variation $p_\a\to p_\a+\delta p_\alpha$ under the normalization constraint $\sum_\alpha p_\alpha=1$:
\be
\delta \lt[S(\bm{\rho},A)+\sig\sum_\alpha p_\alpha\rt]=-\sum_{\alpha}\delta p_\alpha \log p_\alpha +(\sig -1)\sum_\alpha \delta p_\alpha=0.\label{maximization1}
\ee
where $\sig$ is a Lagrangian multiplier. The saddle point corresponds to the constant $p_\a$. The normalization constraint implies
\be
p_\alpha =\cd^{-1}.
\ee
$\cd$ is the total number of $\a=(N,\cj)$'s satisfying the area constraint
\be
C_1=8\pi \ell_P^2\g\sum_{\fl_0=1}^N\sqrt{j_{\fl_0}(j_{\fl_0} +1)}-{\Ar(\Fs)}\in [-\Delta,\Delta].
\ee
The entanglement entropy at the saddle point reduces to
\be
S(\bm{\rho}_c,A)=\log\cd\ .\label{logcd}
\ee
The state $\bm{\rho}_c$ with constant $p_\a$ maximizes the entanglement entropy \eqref{entropyplogp}, since the second derivative of $S(\bm{\rho},A)$ gives $-p_\a^{-1}=-\cd <0$.

The formula \eqref{logcd} reduces the entanglement entropy to counting the microstates $\alpha$ that satisfy the area constraint. The computation of $\cd$ is the same as the LQG black hole microstate counting up to a possibly difference of the degeneracy $G(j)$ at spin $j$ (see below). See e.g. \cite{GP2011,Ghosh:2004wq,QGandBH} for the black hole state-counting. It is interesting to see that once the summing graphs and spins are taken into account, the entanglement entropy of LQG states relates to the black hole entropy. 

We define $n_j$ to be the number of punctures carrying the spin $j\neq 0$. Any $\a=(N,\cj)$ gives a set $\{n_j\}_{j\neq 0}$ satisfying $\sum_{j\neq 0} n_j=N$. Conversely, given a set $\{n_j\}_{j\neq 0}$, the total number of nontrivial punctures $N$ is fixed $\sum_{j\neq 0} n_j=N$, so the number of microstates $\a$ is the number of profiles $\cj$ with the same $\{n_j\}_{j\neq 0}$. The corresponding number of microstates $d[\{n_j\}]$ is given by
\be
d[\{n_j\}]=\lt(\sum_{j\neq 0} n_j \rt)!\, \prod_{j\neq0}\frac{G(j)^{n_j}}{n_j!}
\ee
The degeneracy is $G(j)=1$. Here we keep $G(j)$ in the formula for some generalizations to be discussed later. The total number of microstates $\cd$ is given by
\be
\cd = \tilde{\sum_{\{n_j\}}} d[\{n_j\}],\label{Dstatecount}
\ee
where the sum $\tilde{\sum}_{\{n_j\}}$ is subjected to the area constraint.



In the following, we first give a simple computation of $\cd$ using the heuristic method in statistical mechanics with Stirling's approximation and show $\log \cd$ to be proportional to $\Ar(\Fs)$. Then we provide a more rigorous derivation of the same result in Section \ref{A general framework of state-counting}.

We assume $n_j\gg1$. The number of micro-states $\tilde{\sum}_{\{n_j\}} d[\{n_j\}]$ in the ensemble is dominated by the contribution from the configuration $\{\bar{n}_j\}$ which maximizes $d[\{{n}_j\}]$. $\{\bar{n}_j\}$ is the solution of the variational equation $\delta\log d[\{n_j\}]-\b \delta C_1 /(8\pi\ell_P^2\g) =0$ where $\b$ is the Lagrangian multipliers. Here we neglect the width $\Delta$ of the area constraint and impose $C_1=8\pi\ell_P^2\g\sum_{j\neq 0}n_j\sqrt{j(j+1)}-\Ar(\Fs)\approx 0$ to the variation. We obtain the following most-probable distribution under the Stirling's approximation
\be
\frac{\bar{n}_j}{\sum_{j\neq 0}\bar{n}_j}=e^{-\b\sqrt{j(j+1)}}G(j), \label{boltzmann}
\ee
The normalization determines the value of $\beta$:
\be
\sum_{j\neq 0} G(j) e^{-\b\sqrt{j(j+1)}}=1\qquad\Rightarrow\qquad \beta=\b_0\simeq 1.01527\ \text{for}\ G(j)=1.\label{beta0}
\ee
Insert the result into the area constraint:
\be
\frac{\mathrm{Ar}(\mathfrak{S})}{8\pi\ell_{P}^{2}\gamma}=\sum_{j\neq0}\bar{n}_j\sqrt{j(j+1)}\simeq 1.63066 \sum_{j\neq0}\bar{n}_{j},\label{areaconstr1}
\ee
the mean-value of the area quantum number at each puncture is of $O(1)$. It means that the dominant contribution comes from small $j$'s but large number of punctures on $\Fs$.

Note that $e^{-\b\sqrt{j(j+1)}}$ is generally a irrational number, but the left-hand side of \eqref{boltzmann} is rational, so the solution \eqref{boltzmann} is understood as an approximation. Given the area $\Ar(\Fs)$, the constraint \eqref{areaconstr1} is also understood as an approximation. It is the reason why we include the fluctuation $\Delta$ at the beginning. 

Finally, inserting $\{\bar{n}_j\}$ into $d[\{n_j\}]$ gives that the entanglement entropy $S(\bm{\rho}_c,A)$ is proportional to the area $\Ar(\Fs)$:
\be
S(\bm{\rho}_c,A) &=& \log \cd\simeq \log d[\{\bar{n}_j\}]=\b_0\sum_{j\neq 0}\bar{n}_{j}\sqrt{j(j+1)}\nonumber\\
&\simeq & \frac{\beta_0}{8\pi\ell_{P}^{2}\gamma}\mathrm{Ar}(\mathfrak{S})\ .\label{resultS}
\ee

{
In the above counting of the micro-states, we have ignored the possible constraint coming from the intertwiners. Even in the case that there are many nodes in $A$, the intertwiners impose the constraint on the spin profiles $\a$ by requiring the sum of all spins is an integer \footnote{First, the sum of spins of each individual intertwiner is an integer, so the sum of these spin-sums over all intertwiners gives an integer. Clearly the spins in the internal links are over-counted and needs to be subtracted. The number that should be subtracted is again an integer, because each internal spin are over-counted twice. }. It is also possible that the triangle inequalities give some additional constraints. However, these constraints do not modify the area law of the entropy. The explicit evidences are provided in Section \ref{Logarithmic correction} and in \cite{spinfoamstack}. We will also come back to this point in Section \ref{Hidden sector Hilbert space}.
}

\section{A general scheme of state-sum}\label{A general framework of state-counting}

In this section, we provide a general scheme of computing $\cd$ and various generalizations that will appear in the following sections and in \cite{spinfoamstack,spinfoamstack1}. The discussion in this section mostly follows from the computation of LQG black hole entropy in \cite{Agullo:2010zz,Agullo:2009eq,BarberoG:2008dee,Agullo:2008yv,Engle2011}.

We introduce dimensionless area $a=\Ar(\Fs)/(4\pi\gamma\ell_{P}^{2})$ and consider a counting of the number of states whose total area is less or equal to $a$:
\be
\cn_{\leq }(a)=\sum_{N=1}^\infty \sum_{k_1,\cdots,k_N\in\Z_+} P(k)\Theta\lt(a-\sum_{i=1}^N\sqrt{k_i(k_i+2)}\rt),\label{cn(a)}
\ee
where $k=2j$ and $P\left(k\right)=P\left(k_{1},\cdots,k_{N}\right)$ is the degeneracy at each spin configuration. The number of states in $[a-\delta,a+\delta]$, where $\delta=\Delta/(4\pi\gamma\ell_{P}^{2})\sim O(1)$, is given by
\be
\cn_{\leq}\left(a+\delta\right)-\cn_{\leq}\left(a-\delta\right),
\ee
which equals $\cd$ in \eqref{Dstatecount} in the case $P(s)=1$.

Instead of focusing on $P\left(k\right)=1$, let us consider the following ansatz:
\be
P\left(k\right)=\int_\cc \rmd\mu\left(z\right)\prod_{i=1}^Nf\left(k_i,z\right)=\int_\cc \rmd\mu\left(z\right)\prod_{k\in\Z_+} f\left(k,z\right)^{n_k},\label{generalansatz}
\ee
for a certain given function $f(k,z)$. Here $\cc$ is an auxiliary measure space with the positive measure $\rmd\mu(z)$. The key point here is that the integrand of $P(k)$ is factorized into contributions at individual punctures, while each factor $f(k_i,z)$ depending on $i$ only through $k_i$. We assume $f\left(k,z\right)\geq 0$ for $P(k)$ being interpreted as counting the number of states. 


The state-sum in $\cn_{\leq}(a)$ is a finite sum due to the truncation by $\Theta(\cdots)$. By interchanging the sum and integral, we obtain
\be
\cn_{\leq}(a)=\int_\cc \rmd\mu\left(z\right)\sum_{N=1}^\infty \sum_{k_1,\cdots,k_N\in\Z_+} \beta(k)\Theta\lt(a-\sum_{i=1}^N\sqrt{k_i(k_i+2)}\rt),\qquad \beta(k)=\prod_{i=1}^Nf\left(k_i,z\right)
\ee
The sum of the integrand can be computed by the following method of inverse Laplace transform:

\begin{lemma}\label{lemmaV1}
Given two sequence $\{\a_n\}_{n=1}^\infty $ and $\{\b_n\}_{n=1} ^\infty $ where $\b_n\in\C$ and $\a_n>0$, if $\sum_{n=1}^\infty\left|\beta_{n}\right|e^{-\alpha_{n}\re(s)}<\infty$ for some $\re(s)>0$, we have
\be
\sum_{n,\alpha_{n}< a}\beta_{n}=\sum_{n=1}^{\infty}\beta_{n}\Theta\left(a-\alpha_{n}\right) =\frac{1}{2\pi i}\int_{T-i\infty}^{T+i\infty}\frac{\rmd s}{s}\left[\sum_{n=1}^{\infty}\beta_{n}e^{-\alpha_{n}s}\right]e^{as}.\label{invLaplace}
\ee
for the cut-off $a$ that does not coincide with any $\a_n$. The parameter $T>0$ is greater than the real part of all singularities given by the integrand.
\end{lemma}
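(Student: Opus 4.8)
The statement is a form of Perron's formula, and the plan is to reduce it to the classical discontinuous (Perron) integral for a single exponential and then to sum. The engine is the identity
\[
\frac{1}{2\pi i}\int_{T-i\infty}^{T+i\infty}\frac{e^{xs}}{s}\,\rmd s=\Theta(x)\qquad(x\neq 0,\ T>0),
\]
which I would prove by contour deformation: for $x>0$ close the vertical line to the left, where $e^{xs}$ decays as $\re(s)\to-\infty$, picking up the simple pole at $s=0$ with residue $1$; for $x<0$ close to the right, enclosing no pole and giving $0$. The contributions of the closing arcs vanish in the limit by a Jordan-type estimate, the extra factor $1/s$ providing the needed decay. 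The value at $x=0$ is $\tfrac12$, which is precisely why the hypothesis excludes $a=\alpha_n$: each exponent $x=a-\alpha_n$ is then nonzero.

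Next I would set up absolute convergence. Writing $D(s)=\sum_n\beta_n e^{-\alpha_n s}$, the hypothesis gives $\sum_n|\beta_n|e^{-\alpha_n\sigma_0}<\infty$ for some $\sigma_0>0$; since $\alpha_n>0$, the bound $e^{-\alpha_n\re(s)}\le e^{-\alpha_n\sigma_0}$ for $\re(s)\ge\sigma_0$ shows that $D(s)$ converges absolutely and uniformly on the whole vertical line $\re(s)=T$ for any fixed $T\ge\sigma_0$, with $|D(s)|\le M:=\sum_n|\beta_n|e^{-\alpha_n T}$ independent of $\im(s)$. On a truncated segment the series therefore converges uniformly, so for finite $R$ the sum and the integral may be interchanged:
\[
\frac{1}{2\pi i}\int_{T-iR}^{T+iR}\frac{D(s)e^{as}}{s}\,\rmd s=\sum_n\beta_n\,I_R(a-\alpha_n),\qquad I_R(x):=\frac{1}{2\pi i}\int_{T-iR}^{T+iR}\frac{e^{xs}}{s}\,\rmd s.
\]

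The final step is to send $R\to\infty$ on both sides. On the left this produces the right-hand side of \eqref{invLaplace}, understood as the symmetric principal-value limit. On the right I would invoke the effective (truncated) Perron bound $|I_R(x)-\Theta(x)|< e^{xT}/(\pi R|x|)$ for $x\neq 0$, which both gives the pointwise convergence $I_R(a-\alpha_n)\to\Theta(a-\alpha_n)$ and furnishes a dominating estimate,
\[
\sum_n|\beta_n|\,\bigl|I_R(a-\alpha_n)-\Theta(a-\alpha_n)\bigr|\le\frac{e^{aT}}{\pi R}\sum_n\frac{|\beta_n|e^{-\alpha_n T}}{|a-\alpha_n|}.
\]
Because the $\alpha_n$ relevant here form a discrete set tending to infinity (the area levels $\sum_i\sqrt{k_i(k_i+2)}$) and $a$ avoids all of them, one has $\inf_n|a-\alpha_n|=:\delta>0$, so the remaining sum is bounded by $M/\delta<\infty$ and the right-hand side is $O(1/R)\to0$. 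This yields $\sum_n\beta_n I_R(a-\alpha_n)\to\sum_n\beta_n\Theta(a-\alpha_n)$ and completes the identification.

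The main obstacle is that the contour integral is only conditionally convergent: on $\re(s)=T$ the integrand decays merely like $1/|s|$, so $\int|D(s)e^{as}/s|\,|\rmd s|=\infty$ and one cannot apply Fubini directly to the full infinite integral. The truncate-then-take-the-limit strategy circumvents this, but it requires the quantitative Perron error bound rather than bare pointwise convergence, and it requires that the exponents $\alpha_n$ not accumulate at the cutoff $a$ (guaranteed here by discreteness of the area spectrum). Establishing that error bound—essentially estimating $I_R(x)-\Theta(x)$ by shifting the truncated contour and bounding the horizontal segments—is the one genuinely technical computation, and I would treat it as the crux.
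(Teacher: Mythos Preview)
Your argument is correct but proceeds quite differently from the paper. You evaluate the Bromwich integral directly, term by term, via the Perron discontinuous integral and then control the $R\to\infty$ limit with the quantitative truncation bound $|I_R(x)-\Theta(x)|<e^{xT}/(\pi R|x|)$. The paper instead goes through the \emph{forward} Laplace transform: it computes $s\int_0^\infty e^{-as}\bigl(\sum_n\beta_n\Theta(a-\alpha_n)\bigr)\rmd a=\sum_n\beta_n e^{-\alpha_n s}$ by Fubini--Tonelli, establishes the exponential bound $|\sum_{\alpha_n<a}\beta_n|\le Ce^{a\re(s)}$, and then appeals to Lerch's uniqueness theorem to identify the step function with the Bromwich inversion integral. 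Your route is the classical analytic-number-theory proof of Perron's formula and is more self-contained about the convergence of the contour integral (which is only conditional, as you correctly flag); the paper's route is shorter and more conceptual but leans on the inverse-Laplace-transform machinery as a black box. One point worth noting: your final dominated-convergence step requires $\inf_n|a-\alpha_n|>0$, which you justify using the discreteness of the area spectrum in the application; this is not part of the lemma's general hypotheses, so as stated your proof covers the intended use but not the lemma in full generality. The paper's Laplace-transform argument does not explicitly need that gap condition.
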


\begin{proof}
Let us consider the following Laplace transform for $\re(s)>0$
\be
s\int_{0}^{\infty}\rmd a\, e^{-as}\sum_{n=1}^{\infty}\beta_{n}\Theta\left(a-\alpha_{n}\right)=s\sum_{n=1}^{\infty}\beta_{n}\int_{\a_n}^{\infty}\rmd a\, e^{-as}=\sum_{n=1}^{\infty}\beta_{n}e^{-\alpha_{n}s}.
\ee
Interchanging the integral and sum is justified by the Fubini-Tonelli theorem and the following absolutely convergent integral (by the hypothesis):
\be
\sum_{n=1}^{\infty}\int_{0}^{\infty}\rmd a\left|e^{-a s}\beta_{n}\Theta\left(a-\alpha_{n}\right)\right|=\sum_{n=1}^{\infty}\frac{\left|\beta_{n}\right|}{\mathrm{Re}\left(s\right)}e^{-\alpha_{n}\re(s)}.\nonumber
\ee
Moreover, given $\re(s)>0$ such that $\sum_{n=1}^\infty\left|\beta_{n}\right|e^{-\alpha_{n}\re(s)}$ converges, we have the uniform bound
\be
\sum_{n,\a_n<a}|\beta_n|e^{-a\re(s)}\leq \sum_{n,\a_n<a}|\beta_n|e^{-\alpha_n\re(s)}\leq C,
\ee
for some $C>0$. Therefore $|\sum_{n,\a_n<a}\beta_n|\leq C e^{a\re(s)}$. The inverse Laplace transform of $s^{-1}\sum_{n=1}^{\infty}\beta_{n}e^{-\alpha_{n}s}$ is uniquely determined by the Lerch's theorem. Eq.\eqref{invLaplace} is the formula of the inverse Laplace transform.

\end{proof}

In applying \eqref{invLaplace} to the integrand of $\cn_{\leq}(a)$, we identify $\sum_{n=1}^{\infty}$ to $\sum_{N=1}^\infty\sum_{k_1,\cdots,k_n\in\Z_+}$ and
\be
\b_n\equiv \b(k),\qquad \a_n\equiv \sum_{i=1}^N\sqrt{k_i(k_i+2)}.
\ee
Assuming the assumption of Lemma \ref{lemmaV1} is satisfied by $\b(k)$, we obtain
\be
\cn_{\leq}(a)&=&\int_\cc \rmd\mu(z)\frac{1}{2\pi i}\int_{T-i\infty}^{T+i\infty}\frac{\rmd s}{s}e^{as}\left[\sum_{N=1}^\infty\sum_{k_1,\cdots,k_n\in\Z_+}\prod_{i=1}^N f(k_i,z)e^{-s\sum_{i=1}^N\sqrt{k_i(k_i+2)}}\right]\nonumber\\
&=&\int_\cc \rmd\mu(z)\frac{1}{2\pi i}\int_{T-i\infty}^{T+i\infty}\frac{\rmd s}{s}e^{as}\left[\sum_{N=1}^\infty\lt(\sum_{k\in\Z_+} f(k,z)e^{-s\sqrt{k(k+2)}}\rt)^N\right]\nonumber\\
&=&\int_\cc \rmd\mu(z)\,\frac{1}{2\pi i}\int_{T-i\infty}^{T+i\infty}\frac{\rmd s}{s}e^{as}\left(1-\sum_{k\in\Z_+}f\left(k,z\right)e^{-s\sqrt{k(k+2)}}\right)^{-1}.\label{cnleq}
\ee

For any $z\in\cc$, if the solutions to
\be
1-\sum_{k\in\Z_+}f\left(k,z\right)e^{-s\sqrt{k(k+2)}}=0
\ee 
give a finite number of simple poles $\{s_I(z)\}_I$ in the complex $s$-plane with $\re(s_I(z))>0$. By the residue theorem and $a\gg1$
\be
\cn_{\leq}\left(a\right)=\sum_I\int_\cc \rmd\mu(z)\, e^{a s_I(z)} s_I(z)^{-1}r_I(z)+\text{subleading },
\ee
where $r_I(z)$ is the residue. For $a\gg1 $, the integral over $\cc$ can be computed by the stationary phase method: We solve $\partial_z s_I(z)=0$ for the critical points $z_0$ and denote by $s_0=\max_{I,z_0} (\re [s_I(z_0)])$, we obtain
\be
\cn_{\leq}\left(a\right)= C a^{n} e^{a s_0}\lt[1+O(a^{-1})\rt]=\exp\lt[s_0 a+n\log(a)+O(1)\rt],
\ee
where $C$ is a constant. The coefficient $n$ of the logarithmic correction is generally negative, and $-2n\geq 0$ is the dimension of the Hessian matrix that is assumed to be nondegenerate.

The number of states in $[a-\delta,a+\delta]$ with $a\gg1$ and $\delta\sim O(1)$ is given by
\be
\cn_{\leq}\left(a+\delta\right)-\cn_{\leq}\left(a-\delta\right)&=&C a^{n} e^{a s_0} (e^{ s_0\delta }-e^{- s_0\delta })\lt[1+O(a^{-1})\rt].\label{a+da-d}
\ee
where $(e^{ s_0\delta }-e^{- s_0\delta })$ only give a correction to $O(1)$ on the exponent. As a result, we obtain in general the entropy
\be
S&=&\log\lt[\cn_{\leq}\left(a+\delta\right)-\cn_{\leq}\left(a-\delta\right)\rt]=s_0 a+n\log(a)+O(1)\nonumber\\
&=&\frac{2s_0}{8\pi\g\ell_P^2}\Ar(\Fs)+n\log\lt(\Ar(\Fs)\rt)+O(1).
\ee
This result is obtained from the general ansatz of $P(k)$ in \eqref{generalansatz}. Different choices of $\cc$, $\rmd\mu(z)$, and $f(k,z)$ only relate to two different parameters $s_0$ and $n$ in the entropy.

The entanglement entropy discussed in the last section corresponds to $P(s)=1$, namely $f(k,z)=1$ and trivial $\cc$. Then $s_0$ is trivially a constant:
\be
\sum_{k\neq 0}e^{-s_{0}\sqrt{\left(k+1\right)^{2}-1}}=1,\qquad 2s_0=\b_0\simeq 1.01527\ .
\ee
The $z$-integral is trivial, so there is no logarithmic correction: $n=0$. We reproduce the result in \eqref{resultS}:
\be
S(\bm{\rho}_c,A)=\log\cd = \frac{\beta_0}{8\pi\ell_{P}^{2}\gamma}\mathrm{Ar}(\mathfrak{S})+O(1).
\ee

\section{Bulk entropy}\label{Bulk entropy}

In the above discussion, we have assumed that the state $\Psi_{\Ar(\Fs)}$ does not have any entanglement between the bulk degrees of freedom inside $A$ and $\bA$. In terms of spin-networks, these bulk degrees of freedom include spins $\{j_\fl\}_{\fl\neq\fl_0}$, intertwiners, and graphs inside $A$ and $\bA$.

The bulk entanglement is turned on when the state in the bracket of \eqref{Psisum1} is not a pure tensor product, i.e. we relax the condition 2 in Section \ref{Fixed-area state} and consider the following family of states
\be
\Psi_{\Ar(\Fs)}=\sum_{\alpha=(N,\cj)}p_\alpha^{1/2}  \psi^{(\a)}_{A,\bA}\,\chi_{[-\Delta,\Delta]}\lt(C_1(\a)\rt),\qquad \psi^{(\a)}_{A,\bA}=\sum_{k,l}\psi^{(\a)}_{kl} \lt(e^k_{A,\alpha}\otimes\tilde{e}^l_{\bA,\alpha}\rt)\label{Psisum222}
\ee
where $e^k_{A,\alpha}$ and $\tilde{e}^l_{\bA,\alpha}$ form the orthonormal basis in $\hat\ch_{A,\a}$ and $\hat\ch_{\bA,\a}$ respectively. We assume $\psi_{\a;A,\bA}$ is normalized $\sum_{kl}\psi_{kl}^{(\a)}\bar{\psi}_{kl}^{(\a)}=1$ for each $\a$. In this case, the diagonal block given by $\bm{\rho}=|\Psi_{\Ar(\Fs)}\rangle \langle \Psi_{\Ar(\Fs)}|$ is given by
\be
\Tr_{\bA,\alpha}(\bm{\rho}_{\a\a})=p_\alpha \bm{\rho}_{A,\alpha},\qquad \bm{\rho}_{A,\alpha}=\sum_{k,k',l}\psi_{kl}^{(\a)}\bar{\psi}^{(\a)}_{k'l}|e^k_{A,\alpha}\rangle\langle e^{k'}_{A,\alpha}|.
\ee
In the entanglement entropy $S(\bm{\rho},A) $ in \eqref{entropyformula}, the reduced density matrix $\rho_{A,\a}$ has a nontrivial von Neumann entropy $S(\bm{\rho}_{A,\a})$ for generic $\psi_{kl}$
\be
S(\bm{\rho}_{A,\a})=-\Tr\lt(\psi^{(\a)}{\psi}^{(\a)\dagger}\log \psi^{(\a)}{\psi}^{(\a)\dagger}\rt),\label{totalS}
\ee
where the matrix $\psi^{(\a)}{\psi}^{(\a)\dagger}$ has elements $\sum_{l}\psi_{kl}^{(\a)}\bar{\psi}^{(\a)}_{k'l}$. The entropy $S(\bm{\rho}_{A,\a})$ describes the entanglement of the bulk degrees of freedom in $\hat{\ch}_{A,\a}\otimes\hat{\ch}_{\bA,\a}$, so we call $S(\bm{\rho}_{A,\a})$ is the bulk entropy.

\subsection{Random states}


Although here $\psi^{(\a)}_{A,\bA}$ with fixed $\a$ generally sums different internal spins on $\fl\neq \fl_0$, let us first restrict our attention to $\psi_{\a;A,\bA}$ with fixed internal spins. In this case, the Hilbert space is a finite tensor product of the intertwiner spaces, which are finite dimensional. When $|\psi_{\a;A,\bA}\rangle$ is a random state $|\psi_{\a;A,\bA}\rangle=U |\psi^{(0)}_{\a;A,\bA}\rangle$ where $\psi^{(0)}_{\a;A,\bA}$ is an arbitrary reference state and the unitary transformation $U$ is averaged over all unitary transformations (with Haar integration) on the Hilbert subspace with fixed $j_\fl$ for all $\fl\subset \G$, the random averaged bulk entropy $\langle S(\bm{\rho}_{A,\a})\rangle$ behaves as a quantum-information-theoretic volume law \cite{Bianchi:2023avf,Bianchi:2021aui}
\be
\langle S(\bm{\rho}_{A,\a})\rangle \sim \cn_A \overline{\log d_\fn}, \label{Nlogd}
\ee
for large graph $\G$. The random averaged entropy describes the typical behavior of the states in the Hilbert space. $\cn_A$ is the number of nodes in $A$ and is assumed to be less than half of the total number $\cn$ of nodes in $\G$. ${d_\fn}$ is the dimension of the intertwiner space $\ch_\fn(\{j_\fl\})$, and $\overline{\cdots}$ is the average over nodes in $A$. This result is due to the fact that the random state is maximally entangled approximately for large dimension. The large dimension also makes the dispersion around the averaged value exponentially small \cite{Page:1993df}. The right-hand side of \eqref{Nlogd} is the logarithmic of the dimension of $\otimes_{\fn\in A}\ch_\fn(\{j_\fl\})$ for fixed $j_\fl$, and the dimension is large for large $\cn_A$ and $\cn$. We emphasize that the volume law in \eqref{Nlogd} is quantum-information-theoretic, because \eqref{Nlogd} relates to $\cn_A$ rather than the geometrical volume measured by the LQG volume operator.

This argument can be generalized to random states in the larger Hilbert space $\hat{\ch}_{A,\a}\otimes\hat{\ch}_{\bA,\a}$, where the internal spins $j_\fl$, $\fl\neq \fl_0$ are not fixed. $\hat{\ch}_{A,\a},\hat{\ch}_{\bA,\a}$ are arbitrary finite-dimensional truncations of ${\ch}_{A,\a},{\ch}_{\bA,\a}$. We consider $|\psi_{\a;A,\bA}\rangle$ to be a random state $|\psi_{\a;A,\bA}\rangle=\mathcal{U}_{\a} |\psi^{(0)}_{\a;A,\bA}\rangle$, where $\psi^{(0)}_{\a;A,\bA}\in\hat{\ch}_{A,\a}\otimes\hat\ch_{\bA,\a}$ is an arbitrary reference state and the unitary transformation $\mathcal{U}_{\a}$ is averaged over all unitary transformations (with Haar integration) on $\hat\ch_{A,\a}\otimes\hat\ch_{\bA,\a}$. We assume both $ \dim(\hat\ch_{A,\a}) $ and $ \dim(\hat\ch_{\bA,\a})$ are large. Then the random averaged bulk entropy $\langle S(\bm{\rho}_{A,\a})\rangle$ is given approximately by
\be
\langle S(\bm{\rho}_{A,\a})\rangle \simeq \log\lt[ {\rm min}\lt\{\dim(\hat\ch_{A,\a}),\dim(\hat\ch_{\bA,\a})\rt\}\rt]. \label{logdim}
\ee
The dispersion around this averaged value are suppressed by the large dimension. Some details of deriving \eqref{logdim} are given in Appendix \ref{random average}.

\subsection{Renormalization of area-law}\label{Renormalization of area-law}

Given the nontrivial bulk entropy $S(\bm{\rho}_{A,\a})$, the maximization of $S(\bm\rho, A)$ in \eqref{maximization1} needs to be modified to include $S(\bm{\rho}_{A,\a})$
\be
\delta S\left(\bm\rho,A\right)=-\sum_{\alpha}\delta p_{\alpha}\log p_{\alpha}+\left(\sigma-1\right)\sum_{\alpha}\delta p_{\alpha}+\sum_{\alpha}\delta p_{\alpha}S\left(\bm{\rho}_{A,\a}\right)=0,
\ee
which implies the non-constant $p_\alpha$ 
\be
p_{\alpha}=\frac{e^{S\left(\bm{\rho}_{A,\a}\right)}}{\sum_{\alpha}e^{S\left(\bm{\rho}_{A,\a}\right)}}.
\ee
The maximized entanglement entropy $S\left(\bm\rho_{c},A\right)$ is given by
\be
S\left(\bm\rho_{c},A\right)=\log\sum_{\alpha}e^{S\left(\bm{\rho}_{A,\a}\right)}\chi_{[-\Delta,\Delta]}\lt(C_1(\a)\rt)\ .\label{SlogsumeS}
\ee
In the case of small bulk entanglement $S(\bm{\rho}_{A,\a})\ll1 $, we have the appoximation
\be
S\left(\bm\rho_{c},A\right)\simeq \log{\cal D}+S_{\rm bulk}\simeq \frac{\b_0}{8\pi\g\ell_P^2} \Ar(\Fs)+S_{\rm bulk}, \label{perturbSbulk}
\ee
where $S_{\rm bulk}=\cd^{-1}\sum_\alpha S(\bm{\rho}_{A,\a})$ is the averaged bulk entropy over $\a=(N,\cj)$. This formula shares some similarity with the generalized Bekenstein-Hawking entropy, which contains both the area-law entropy and the entropy of the bulk degrees of freedom.

$e^{S\left(\bm{\rho}_{A,\a}\right)}$ can be viewed effectively as the degeneracy at each $\a=(N,\cj)$, i.e. $e^{S\left(\bm{\rho}_{A,\a}\right)}$ is the same as $P(k)$ in Section \ref{A general framework of state-counting}. Then $\sum_{\alpha}e^{S\left(\bm{\rho}_{A,\a}\right)}$ in \eqref{SlogsumeS} can still be viewed as as counting the number of microstates. As an example, let us consider $e^{S\left(\bm{\rho}_{A,\a}\right)}$ as a function of $\{j_{\fl_0}\}$ to be factorized into punctures,
\be
e^{S\left(\bm{\rho}_{A,\a}\right)}=\prod_{\fl_0=1}^N G(j_{\fl_0}).\label{factorbulk}
\ee
This can be obtained by the following factorized state
\be
\psi_{A,\bar{A}}^{(\alpha)}=\bigotimes_{\mathfrak{l}_{0}}F_{A,\bA}(j_{\fl_0}),\qquad F_{A,\bA}(j_{\fl_0})=\sum_{a,b}F_{ab}\left(j_{\mathfrak{l}_{0}}\right)\left(e_{A,j_{\mathfrak{l}_{0}}}^{a}\otimes\tilde{e}_{\bar{A},j_{\mathfrak{l}_{0}}}^{b}\right).\label{ansatzpsi}
\ee
The state $F_{A,\bA}(j_{\fl_0})$ is defined on a subgraph $\G_{\fl_0}\subset\G_0$ intersecting $\Fs$ only with a single link $\fl_0$ with $j_{\fl_0}\neq 0$. $\{e_{A,j_{\mathfrak{l}_{0}}}^{a}\}_a ,\{\tilde{e}_{\bar{A},j_{\mathfrak{l}_{0}}}^{b}\}_b$ are orthonormal. For the graph on which $\psi_{A,\bar{A}}^{(\alpha)}$ in \eqref{ansatzpsi} is based, each node close to $\Fs$ has only one link intersect $\Fs$. The matrix $\psi^{(\alpha)}$ in \eqref{totalS} is factorized $\psi^{(\alpha)}=\otimes_{\mathfrak{l}_{0}}F(j_{\fl_0})$. We assume that the matrix $F$ depends on $\fl_0$ only through $j_{\fl_0}$, then
\be
G(j_{\fl_0})=e^{-\Tr\lt(F(j_{\fl_0})F(j_{\fl_0})^\dagger\log F(j_{\fl_0})F(j_{\fl_0})^\dagger\rt)}
\ee
The difference between $S(\bm{\rho_c},A)$ in \eqref{SlogsumeS} and $\log \cd$ is only given by including the degeneracy $G(j)\neq 1$ in \eqref{Dstatecount}, i.e.
\be
\sum_{\a}e^{S\left(\bm{\rho}_{A,\a}\right)}=\tilde{\sum_{\{n_j\}}} d[\{n_j\}],\quad d[\{n_j\}]=\lt(\sum_{j\neq 0} n_j \rt)!\, \prod_{j\neq0}\frac{G(j)^{n_j}}{n_j!}
\ee
The degeneracy modifies $\bar{n}_j$ in \eqref{boltzmann} and the normalization by
\be
\frac{\bar{n}_j}{\sum_{j\neq 0}\bar{n}_j}=e^{-\b\sqrt{j(j+1)}}G(j),\qquad \sum_{j\neq 0}e^{-\b\sqrt{j(j+1)}}G(j)=1.\label{betaGj}
\ee
In the method of Section \ref{A general framework of state-counting}, the modification corresponds to change from $P(k)=1$ to 
\be
P(k)=\prod_{k\neq 0}G(k)^{n_k}, \qquad\text{or}\qquad f(k,z)=G(k),
\ee
while the $z$-integral is still trivial. The resulting $S\left(\bm\rho_{c},A\right) $ still gives the area law at the leading order and no logarithmic correction
\be
S(\bm{\rho}_c,A) =  \frac{\b}{8\pi\g \ell_P^2}\Ar(\Fs)+O(1),
\ee
where $\beta$ determined by \eqref{betaGj} is different from $\b_0$ for $G(j)\neq 1$. In this example, the bulk entanglement effectively ``renormalizes'' the coefficient of the area-law:
\be
\frac{\b_0}{8\pi\g \ell_P^2}\to \frac{\b}{8\pi\g \ell_P^2}.
\ee

The renormalization of the coefficient can also be equivalently understood as the renormalization of Newton's constant: If we identify
\be
\frac{\b_0}{8\pi\g \ell_P^2}=\frac{1}{4[\ell_P^2]_{0}},\qquad \frac{\b}{8\pi\g \ell_P^2}=\frac{1}{4[\ell_P^2]_{\rm ren}},
\ee
then we obtain the the renormalization
\be
[\ell_P^2]_{\rm ren}=\frac{\b_0}{\b}[\ell_P^2]_0.
\ee 
This is an analog of the known result for black hole entropy: Given that the generalized Bekenstein-Hawking entropy that contains both the area-term and the bulk entropy of quantum fields, the bulk entropy is divergent due to the vacuum entanglement entropy of short-wavelength modes across the horizon. The leading-order divergence of the bulk entropy is proportional to the area and thus corresponds to a renormalization of Newton's constant \cite{Larsen:1995ax,Kabat:1995eq,Jacobson:1994iw}. See also \cite{Ghosh:2012wq} for the renormalization in context of the LQG black hole.

\subsection{Logarithmic correction}\label{Logarithmic correction}

The bulk entropy can result in the logarithmic correction to the area-law. As an example, let us assume every graph $\G$ involved in $\ch$ has only a single node $\fn_A$ in $A$, and every $\psi^{(\a)}_{A,\bA}$ is either a maximal entangled state or a random state. Then $e^{S\left(\bm{\rho}_{A,\a}\right)}$ equals the dimension of the intertwiner space at the node $\fn_A$. All links connecting to the node $\fn_A$ make punctures on $\Fs$. Then
\be
P(k)=e^{S\left(\bm{\rho}_{A,\a}\right)}=\frac{1}{\pi}\int_{0}^{2\pi}d\theta\sin^{2}\left(\theta\right)\prod_{k}\left[\frac{\sin\left(\left(k+1\right)\theta\right)}{\sin\left(\theta\right)}\right]^{n_{k}}.\label{eq75}
\ee
Both the triangle inequality and the constraint that the sum of $k$ must be even are imposed by $P(k)$ to the state-sum $\cn_\leq(a)$ in \eqref{cn(a)}.

Comparing to the general ansatz \eqref{generalansatz}, we identify the auxiliary space $\cc$ to be a circle and 
\be
\int_\cc d\mu\left(z\right)\cdots =\frac{1}{\pi}\int_{0}^{2\pi}d\theta\sin^{2}\left(\theta\right)\cdots,\qquad f\left(k,z\right)=\frac{\sin\left(\left(k+1\right)\theta\right)}{\sin\left(\theta\right)}.
\ee
In this case, the entanglement entropy $S(\bm{\rho}_c,A) $ equals to the SU(2) black hole entropy computed in \cite{Engle2011,Agullo:2009eq}. Indeed, the formula \eqref{cnleq} of $\cn_{\leq}\left(a\right)$ becomes
\be
\cn_{\leq}\left(a\right)=\frac{1}{\pi}\int_{0}^{2\pi}d\theta\sin^{2}\left(\theta\right)\frac{1}{2\pi i}\int_{T-i\infty}^{T+i\infty}\frac{ds}{s}\left(1-\sum_{k\neq0}\frac{\sin\left(\left(k+1\right)\theta\right)}{\sin\left(\theta\right)}e^{-s\sqrt{\left(k+1\right)^{2}-1}}\right)^{-1}e^{as}\ .\nonumber
\ee
The pole with maximal real part is given by
\be
\sum_{k\neq 0}\left(k+1\right)e^{-s_{0}\sqrt{\left(k+1\right)^{2}-1}}=1,\quad\Rightarrow\quad \frac{s_0}{\pi}= 0.274067\cdots,\label{blackholeBetavalue}
\ee
$\beta=2s_0\neq \b_0$ renormalizes the coefficient of the area-law as in the previous example. For a small nonzero $\theta$, the deformation of pole is given by 
\be
s=s_{0}-\alpha\theta^{2}+O\left(\theta^{4}\right),
\ee
for a constant $\a>0$. $\cn_{\leq}\left(a\right)$ can be approximated by
\be
\cn_{\leq}\left(a\right)  
\simeq e^{s_{0}a}\frac{1}{\pi s_{0}}\int_{-\epsilon}^{\epsilon}d\theta e^{-a\alpha\theta^{2}}\theta^{2} =\exp\left[s_{0}a-\frac{3}{2}\log\left(a\right)+O\left(1\right)\right],\qquad \forall \epsilon >0.
\ee
The entanglement entropy gives the logarithmic correction in addition to the renormalization of the area-law:
\be
 S(\bm{\rho}_c,A) = \frac{2s_{0}}{8\pi\gamma\ell_{P}^{2}}\mathrm{Ar}(\Fs)-\frac{3}{2}\log\left(\mathrm{Ar}(\Fs)\right)+O\left(1\right).
\ee

\section{Von Neumann algebra and entanglement entropy: Infinite-dimensional Hilbert space}\label{VNA Infinite-dimensional Hilbert space}

\subsection{Von Neumann algebra of holonomies and fluxes}

We generalize the discussion of von Neumann algebra in Section \ref{finitedimcase} to infinite-dimensional Hilbert space. We focus on a single graph $\G_0$ with finitely many links. The Hilbert space $\ch=\ch_{\G_0}$ on the graph is separable. 

It turns out to be more convenient to enlarge $\ch$ to the Hilbert space including non-gauge-invariant states: $\ch^{\rm (aux)}\simeq L^{2}(\Su^{\times L})$, where $L$ is the number of links in $\G_0$, and define firstly the von Neumann algebra $\cm^{\rm (aux)}$ containing holonomies and (exponentiated) fluxes on $\ch^{\rm (aux)}$. Then $\cm$ is given by $\cm=\bm{p}^{\rm inv}\cm^{\rm (aux)}\bm{p}^{\rm inv}$, where $\bm{p}^{\rm inv}$ is the SU(2) gauge invariant projection from $\ch^{\rm (aux)}$ to $\ch$.

The Hilbert space $\ch^{\rm (aux)}\simeq \otimes_\fl \ch_{\fl}$ has been discussed in \eqref{htilde}, and it has the following decomposition according to $A$ and $\bA$ similar to \eqref{chPj}:
\be
\ch^{\rm (aux)}\simeq \bigoplus_{\alpha}\ch^{\rm (aux)}_{A,\alpha}\otimes \ch^{\rm (aux)}_{\bA,\alpha},
\ee
where $\ch_{A,\alpha }^{\rm (aux)}$ and $\ch_{\bA,\alpha }^{\rm (aux)}$ are expressed in terms of $\widetilde{\ch}_{\fn}$ in \eqref{htilde}:
\be
\ch_{A,\alpha }^{\rm (aux)}=\bigoplus_{\{j_\fl\}_{\fl\subset A}}\bigotimes_{\fn\in\G_0\cap A}\widetilde{\ch}_\fn(\{j_\fl\}),\qquad \ch_{\bA,\alpha }^{\rm (aux)}=\bigoplus_{\{j_\fl\}_{\fl\subset \bA}}\bigotimes_{\fn\in\G_0\cap \bA}\widetilde{\ch}_\fn(\{j_\fl\}).
\ee

Given a Hilbert space $\ch$ and $\cb(\ch)$ the $*$-algebra of bounded operators, a von Neumann algebra is a unital $*$-subalgebra of $\cb(\ch)$ and closed under weak operator topology. Here for any $\alpha$, we define the von Neumann algebra $\cm^{\rm (aux)}_\alpha$ by the algebra of bounded operators on $\ch_{A,\alpha }^{\rm (aux)}$:
\be
\cm^{\rm (aux)} _\alpha:= \cb\lt(\ch_{A,\alpha }^{\rm (aux)}\rt)\otimes \bm{I}^{\rm (aux)}_{\bA,\alpha}, \label{cbHaux}
\ee
where $\bm{I}^{\rm (aux)}_{\bA,\alpha}$ is the identity operator on $\ch_{\bA,\alpha }^{\rm (aux)}$. The von Neumann algebra $\cm^{\rm (aux)}$ on $\ch^{\rm (aux)}$ is the direct sum
\be
\cm^{\rm (aux)}=\bigoplus_\alpha\cm^{\rm (aux)}_\alpha.
\ee
It is clear that $\cm^{\rm (aux)}$ contains the following bounded operators on $\ch^{\rm (aux)}$:

\begin{itemize}

    \item All continuous function on $\Su^{\times L_A}$: The continuous functions $f(\bm{h}_{\fl})$ of holonomy operators $\bm{h}_{\fl}$, where $\fl\subset A$ do not intersect with $\Fs$ (the total number of these links is $L_A$)
    
    \item The Weyl operators of fluxes: $\bm{w}_{\vec{t}}=\exp\lt(i\sum_{\fn,\fl}\vec{t}_{(\fn,\fl)}\cdot \vec{\bm{J}}_{(\fn,\fl)}\rt)$, where $\vec{t}_{(\fn,\fl)}\in\R^3$, $\fn\in A$, and $\fl$ includes $\fl\subset A$ and $\fl_0$ intersecting $\Fs$. $ \vec{\bm{J}}_{(\fn,\fl)}=\vec{\bm{R}}_\fl$ (or $\vec{\bm{L}}_\fl$) acting on $V_{j_\fl}$ (or $V^*_{j_\fl}$) if $\fn$ is the target (or source) of $\fl$. Here all $V_{j_\fl}$ and $V^*_{j_\fl}$ acted by $\vec{\bm{J}}_{(\fn,\fl)}$ belong to $\ch_{A,\alpha }^{\rm (aux)}$.

\end{itemize}
However, if we only focus on the subalgebra formed by the continuous functions of holonomies and the Weyl operators of fluxes. This subalgebra by itself is a unital $*$-algebra, whose closure under weak operator topology defines a von Neumann subalgebra $\cm_0\subseteq \cm^{\rm (aux)}$. It turns out that $\cm_0$ is identical to $ \cm^{\rm (aux)}$, and the argument is given below.

\begin{lemma}
$\cm_0$ contains $\bm{x}_{\vec{s}}=\exp\lt(i\sum_{\fl} {s}_\fl {{\vec{\bm{J}}_{(\fn,\fl)}}\cdot \vec{\bm{J}}_{(\fn,\fl)}}\rt)$, where $s_{\fl}\in\R$, and $\fl$ includes $\fl\subset A$ and $\fl_0$ intersecting $\Fs$.
\end{lemma}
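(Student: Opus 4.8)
The plan is to reduce the claim to a single flux label and then synthesize the Casimir exponential from group averages of Weyl operators already contained in $\cm_0$. First I would fix one pair $(\fn,\fl)$ appearing in the sum and consider the Weyl operators whose parameter is supported on that pair, namely $U^{(\fn,\fl)}(\vec t):=\exp\lt(i\,\vec t\cdot\vec{\bm J}_{(\fn,\fl)}\rt)$ with $\vec t\in\R^3$. Since the exponential map of the compact connected group $\Su$ is surjective and coincides in every representation, the family $g\mapsto U^{(\fn,\fl)}(g)$ is a strongly continuous unitary representation of $\Su$ acting as the spin-$j$ representation on each factor $V_{j_\fl}$ and trivially on the remaining tensor legs. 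By the very definition of the generating set, every $U^{(\fn,\fl)}(g)$ belongs to $\cm_0$.

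Next I would extract the fixed-spin projections by averaging against characters. For each $j\in\mathbb{N}_0/2$ set
\beq
P_j^{(\fn,\fl)} := \dim(V_j)\int_{\Su}\overline{\chi_j(g)}\,U^{(\fn,\fl)}(g)\,\rmd g,
\eeq
where $\chi_j$ is the SU(2) character and $\rmd g$ the normalized Haar measure. By Schur orthogonality $P_j^{(\fn,\fl)}$ is exactly the orthogonal projection onto the subspace on which the factor $V_{j_\fl}$ carries spin $j$, so that the Casimir decomposes as
\beq
\vec{\bm J}_{(\fn,\fl)}\cdot\vec{\bm J}_{(\fn,\fl)} = \sum_j j(j+1)\,P_j^{(\fn,\fl)},\qquad \sum_j P_j^{(\fn,\fl)}=\bm I.
\eeq

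The key step is to show $P_j^{(\fn,\fl)}\in\cm_0$. Because $\cm_0$ is a von Neumann algebra it equals its bicommutant $\cm_0''$, so it suffices to check that the operator-valued integral commutes with every $\bm T\in\cm_0'$. For arbitrary vectors $\phi,\psi$ this follows from
\be
\langle\phi,P_j^{(\fn,\fl)}\bm T\psi\rangle &=& \dim(V_j)\int_{\Su}\overline{\chi_j(g)}\,\langle\phi,U^{(\fn,\fl)}(g)\bm T\psi\rangle\,\rmd g \nonumber\\
&=& \langle\phi,\bm T\,P_j^{(\fn,\fl)}\psi\rangle,
\ee
where I used $U^{(\fn,\fl)}(g)\bm T=\bm T\,U^{(\fn,\fl)}(g)$ inside the integral and the boundedness of the integrand to pull $\bm T$ through the weak integral. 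Hence $P_j^{(\fn,\fl)}\in\cm_0''=\cm_0$. With the projections secured, $\exp\lt(i s_\fl\,\vec{\bm J}_{(\fn,\fl)}\cdot\vec{\bm J}_{(\fn,\fl)}\rt)=\sum_j e^{i s_\fl j(j+1)}\,P_j^{(\fn,\fl)}$ is a bounded Borel function of a self-adjoint operator affiliated to the abelian algebra generated by $\{P_j^{(\fn,\fl)}\}_j\subset\cm_0$; its partial sums are uniformly bounded and converge strongly, so the limit remains in the strongly closed $\cm_0$. Finally, taking the product over the finitely many links $\fl$ — whose factors commute since they act on distinct representation spaces $V_{j_\fl}$ and all lie in $\cm_0$ — yields $\bm x_{\vec s}\in\cm_0$.

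I expect the main obstacle to be the bicommutant step: one must verify that the Haar average is a genuine weak (or strong) limit of operators in $\cm_0$ and that interchanging the commutant element $\bm T$ with the integral is legitimate, which rests on boundedness of $g\mapsto U^{(\fn,\fl)}(g)$ and weak measurability. Everything else — the Schur orthogonality identifying $P_j^{(\fn,\fl)}$, the spectral decomposition of the Casimir, and the strong convergence of the exponential series — is routine.
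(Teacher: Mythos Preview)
Your argument is correct and takes a genuinely different route from the paper's proof. The paper approximates the generators by bounded difference quotients $\bm J^i(t)=\tfrac{1}{it}(\bm w^i_t-\bm I)$, forms the unitary $\exp\!\big[is\sum_i\bm J^i(t)^\dagger\bm J^i(t)\big]\in\cm_0$, and then shows that as $t\to 0$ this family converges weakly to $\exp\!\big(is\,\vec{\bm J}\cdot\vec{\bm J}\big)$, first on the dense span of spin-networks (where the computation reduces to finite-dimensional matrix limits) and then on all of $\ch^{\rm (aux)}$ via a uniform boundedness argument. Your approach instead exploits that the single-label Weyl operators assemble into a strongly continuous unitary $\Su$-representation inside $\cm_0$, extracts the isotypic projections $P_j^{(\fn,\fl)}$ by character averaging, and places them in $\cm_0$ through the bicommutant theorem; the Casimir exponential is then built as a strongly convergent sum over these projections.

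What each buys: the paper's method is elementary and self-contained, avoiding any appeal to Peter--Weyl or the bicommutant theorem, at the price of a somewhat ad hoc approximating family and a two-step convergence argument. Your method is more structural and in fact proves a little more, namely that every spin projection $P_j^{(\fn,\fl)}$ already lies in $\cm_0$; it also generalizes transparently to other compact gauge groups. The one point you flagged---that the Haar-averaged operator really is a weak limit of elements of $\cm_0$---can be settled either by Riemann-sum approximation (each sum is a finite combination of Weyl unitaries, and strong continuity of $g\mapsto U^{(\fn,\fl)}(g)$ gives convergence) or, as you do, by the bicommutant shortcut; both are standard and unproblematic here.
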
 

\begin{proof} 

We skip the indices $(\fn,\fl)$ and denote by $\bm w_{t}^i=\exp(it \bm{J}^i)$. We define $\bm{J}^i(t)=\frac{1}{it}(\bm w_{t}^i-1)$. Both $\bm{J}^i(t)$ and $\bm{J}^i(t)^\dagger$ belongs to $\cm_0$. By the operator topology on $\cm_0$, the unitary operator $\exp[is\sum_i\bm{J}^i(t)^\dagger \bm{J}^i(t)]$ also belongs to $\cm_0$ \footnote{Given any bounded operator $\bm{B}$, the exponential $\exp(\bm B)=\lim_{N\to\infty}\sum_{n=0}^N\frac{\bm B^n}{n!}$ converges in operator norm (so it converges in both strong and weak topologies): $\|\sum_{n=0}^N\frac{\bm B^n}{n!}-\sum_{n=0}^M\frac{\bm B^n}{n!}\|\leq \sum_{n=N+1}^M\frac{\|\bm B\|^n}{n!}\to 0$ (and every Cauchy sequence converges in the space of bounded operators).}. On the dense domain $\Fd^{\rm (aux)}$ spanned by finite linear combinations of spin-network states, we have\footnote{The spin-network states with different $\mathbf{j}=\{j_\fl\}_{\fl\subset \G_0}$ are orthogonal. The subspace of spin-networks with fixed $\mathbf{j}$ is finite dimensional, and $\bm w_t^i$ and $\bm{J}^i$ leave this subspace invariant. Therefore, Eq.\eqref{weaklimitondense} reduces to the limits of the finite-dimensional matrices.}
\be
\lim_{t\to 0}\lag \psi \lt|e^{is\sum_i\bm{J}^i(t)^\dagger \bm{J}^i(t)}\rt|\phi \rag =\lag \psi \lt|e^{is\sum_i\bm{J}^i\bm{J}^i}\rt|\phi \rag,\qquad \forall \psi,\phi\in\Fd^{\rm (aux)}.\label{weaklimitondense}
\ee
Since $\|\exp[is\sum_i\bm{J}^i(t)^\dagger \bm{J}^i(t)] \|=1$ for all $t,s\in\R$, the above weak convergence can be extended to the entire $\ch^{\rm (aux)}$ \footnote{Any $\psi,\phi\in\ch^{\rm (aux)}$ can be approached by the limits of the sequences $\{\psi_i\}_{i=1}^\infty,\{\phi_j\}_{j=1}^\infty$ where $\psi_i,\phi_j\in\Fd^{\rm (aux)}$. Denote by $\bm u_s(t)=\exp[is\sum_i\bm{J}^i(t)^\dagger \bm{J}^i(t)]$ and $\bm u_s(0)=\exp[is\sum_i\bm{J}^i \bm{J}^i]$. $|\langle \psi |\bm u_s(t)|\phi \rangle-\langle \psi_i |\bm u_s(t)|\phi_j \rangle|\leq |\langle \psi |\bm u_s(t)|\phi -\phi_j \rangle|+|\langle \psi - \psi_i |\bm u_s(t)|\phi_j \rangle|\leq \|\psi\|\|\phi -\phi_j\|+\|\psi - \psi_i\|\|\phi_j\|\leq \eps$ and similarly $|\langle \psi |\bm u_s(0)|\phi \rangle-\langle \psi_i |\bm u_s(0)|\phi_j \rangle|\leq \eps$ for sufficiently large $i,j$. Therefore, $|\langle \psi |\bm u_s(t)|\phi \rangle-\langle \psi |\bm u_s(0)|\phi \rangle|\leq |\langle \psi |\bm u_s(t)|\phi \rangle-\langle \psi_i |\bm u_s(t)|\phi_j \rangle|+|\langle \psi_i |\bm u_s(t)|\phi_j \rangle-\langle \psi_i |\bm u_s(0)|\phi_j \rangle|+|\langle \psi |\bm u_s(0)|\phi \rangle-\langle \psi_i |\bm u_s(0)|\phi_j \rangle|\leq 3\eps$ for $t$ close to 0.}.

\end{proof}

\begin{lemma}
    $\cm_0$ is identical to $\cm^{\rm (aux)}$.
\end{lemma}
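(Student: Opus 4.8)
The plan is to prove the nontrivial inclusion $\cm^{\rm (aux)}\subseteq\cm_0$; combined with the inclusion $\cm_0\subseteq\cm^{\rm (aux)}$ already noted above this gives equality. By von Neumann's bicommutant theorem it suffices to show that in each block the generators of $\cm_0$ already produce the full factor $\cb(\ch^{\rm (aux)}_{A,\alpha})\otimes\bm I^{\rm (aux)}_{\bA,\alpha}$ of \eqref{cbHaux}; equivalently, that the joint commutant of the holonomies, fluxes and Casimirs in $A$ is $\bigoplus_\alpha \bm I^{\rm (aux)}_{A,\alpha}\otimes\cb(\ch^{\rm (aux)}_{\bA,\alpha})$. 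I will carry out the direct-generation version of this statement.

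First I would reduce the problem to individual tensor factors. Regrouping the leg spaces $\widetilde\ch_\fn$ by links, the Peter--Weyl isomorphism $\bigoplus_j(V_j\otimes V_j^*)\simeq L^2(\Su)$ identifies $\ch^{\rm (aux)}_{A,\alpha}$ with $\bigl(\bigotimes_{\fl\subset A,\,\fl\neq\fl_0}L^2(\Su)\bigr)\otimes\bigl(\bigotimes_{\fl_0}V_{j_{\fl_0}}\bigr)$, where each internal link $\fl\subset A$ carries a full copy of $L^2(\Su)$ while each boundary link $\fl_0$ contributes only its $A$-half $V_{j_{\fl_0}}$ with $j_{\fl_0}$ fixed by $\alpha$. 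By the preceding lemma the operators $\bm x_{\vec s}$, hence the one-parameter unitary groups $\exp(is\,\vec{\bm J}_{\fl_0}\cdot\vec{\bm J}_{\fl_0})$, belong to $\cm_0$; since a von Neumann algebra is closed under the bounded Borel functional calculus of the self-adjoint generators of such groups, the spectral projections onto the eigenvalues $j_{\fl_0}(j_{\fl_0}+1)$ lie in $\cm_0$, and their product over all $\fl_0$ is the central block projection $\bm P_\alpha\in\cm_0$. This resolves $\cm_0$ into its blocks and lets me argue one $\alpha$ at a time.

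Next I would establish the single-factor density statements. On an internal-link factor $L^2(\Su)$, the holonomy functions $f(\bm h_\fl)$, $f\in C(\Su)$, weakly generate the maximal abelian algebra $L^\infty(\Su)$ of multiplication operators, while the exponentiated fluxes generate the left (and right) translations; any operator commuting with $L^\infty(\Su)$ is itself a multiplication operator, and commuting with all left translations forces its symbol to be translation-invariant, hence constant, so the joint commutant is $\C\bm I$ and the bicommutant theorem yields $\cb(L^2(\Su))$. On a boundary half-link factor $V_{j_{\fl_0}}$ the fluxes $\vec{\bm J}_{\fl_0}$ realise the irreducible spin-$j_{\fl_0}$ representation of $\mathfrak{su}(2)$, so by Schur's lemma the exponentiated fluxes generate all of $\mathrm{End}(V_{j_{\fl_0}})=\cb(V_{j_{\fl_0}})$; the absence of a holonomy across $\Fs$ is harmless precisely because irreducibility already suffices.

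Finally I would assemble the factors. In a fixed block $\alpha$, $\cm_0$ contains the full operator algebra of each tensor factor of $\ch^{\rm (aux)}_{A,\alpha}$ (each tensored with the identity on the remaining $A$-factors and on $\ch^{\rm (aux)}_{\bA,\alpha}$), and these mutually commuting type I factors generate, by the commutation theorem for von Neumann tensor products, the spatial tensor product $\bigotimes\cb(\text{factor})=\cb(\ch^{\rm (aux)}_{A,\alpha})$. Summing over $\alpha$ through the central projections $\bm P_\alpha$ gives $\cm_0\supseteq\bigoplus_\alpha\cb(\ch^{\rm (aux)}_{A,\alpha})\otimes\bm I^{\rm (aux)}_{\bA,\alpha}=\cm^{\rm (aux)}$, completing the proof. \textbf{The main obstacle} is the boundary links: there the holonomy is excluded, so one must first separate the $j_{\fl_0}$-superselection sectors using the Casimir supplied by the preceding lemma, and then rely on irreducibility of $V_{j_{\fl_0}}$ to recover $\mathrm{End}(V_{j_{\fl_0}})$ from the flux alone; a secondary technical point is justifying that generating each tensor factor yields the full $\cb$ of the (infinite-dimensional) tensor product, for which the Tomita commutation theorem is invoked.
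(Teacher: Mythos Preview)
Your argument is correct and takes a genuinely different route from the paper. The paper works explicitly in the spin-network basis: from the abelian family $\{\bm x_{\vec s},\,\exp(i\sum t^3\bm J^3)\}$ it extracts the rank-one projections $\bm P_{(\mathbf j,\mathbf m,\mathbf n)}=|\mathbf j,\mathbf m,\mathbf n\rangle\langle\mathbf j,\mathbf m,\mathbf n|\otimes\bm I^{\rm (aux)}_{\bA,\alpha}$, then sandwiches $\bm w_{\vec t}\,f(\bm h_\fl)$ between two such projections to manufacture every matrix unit $|\mathbf j',\mathbf m',\mathbf n'\rangle\langle\mathbf j,\mathbf m,\mathbf n|\otimes\bm I^{\rm (aux)}_{\bA,\alpha}$, and finally expands an arbitrary $\bm O_\alpha\in\cb(\ch^{\rm (aux)}_{A,\alpha})$ as a weak sum of these matrix units. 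Your proof instead factorises $\ch^{\rm (aux)}_{A,\alpha}$ link by link, invokes maximal abelianness of $L^\infty(\Su)$ plus translation covariance on each internal $L^2(\Su)$, Schur on each boundary $V_{j_{\fl_0}}$, and then the tensor commutation theorem to combine the factors.

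What each buys: the paper's construction is elementary (no bicommutant or Tomita commutation theorem) and, as a by-product, exhibits the minimal projections of $\cm^{\rm (aux)}$ explicitly, which the paper reuses when defining the renormalized trace $\hat{\Tr}(\bm P_\alpha)=\lambda_\alpha$. Your approach is more structural, isolates precisely which generator is doing the work on which tensor factor (holonomy + flux on internal links, flux alone on boundary legs), and would carry over verbatim to other compact gauge groups. Your identification of the ``main obstacle''---that the boundary half-links lack a holonomy and must be handled by first projecting with the Casimir of the preceding lemma and then invoking irreducibility of $V_{j_{\fl_0}}$---matches exactly where the paper's proof uses $f(\bm h_\fl)$ to change $\mathbf j$ only on the internal links while leaving $\alpha$ fixed.
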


\begin{proof}

The operators $\bm{x}_{\vec{s}}$ and $\exp\lt(i\sum_{\fn,\fl}{t}^3_{(\fn,\fl)} {\bm{J}}^3_{(\fn,\fl)}\rt)$ form an abelian $*$-subalgebra and gives the spectral projections $\bm{P}_{(\mathbf{j,m,n})}$, each of which projects one of $\ch_{A,\alpha }^{\rm (aux)}$ to 1-dimensional subspace
\be
\bm{P}_{(\mathbf{j,m,n})}\ch^{\rm (aux)}=\bbc|\mathbf{j,m,n}\rangle\otimes \ch_{\bA,\alpha }^{\rm (aux)},\qquad \bm{P}_{(\mathbf{j,m,n})}=|\mathbf{j,m,n}\rangle\langle\mathbf{j,m,n}|\otimes \bm{I}^{\rm (aux)}_{\bA,\alpha},\label{minimalproj}
\ee
where $|\mathbf{j,m,n}\rangle$ denotes the tensor product state whose factors are $|j_\fl,m_\fl\rangle\in V_{j_\fl}$ and $\langle j_{\fl'},n_{\fl'}|\in V^*_{j_{\fl'}}$ for $ V_{j_\fl},V^*_{j_{\fl'}}$ belongs to $\ch_{A,\alpha }^{\rm (aux)}$. All $\bm{P}_{(\mathbf{j,m,n})}$ belong to $\cm_0$ as the minimal projections. The states $|\mathbf{j,m,n}\rangle$ form an orthonormal basis in $\ch^{\rm (aux)}_{A,\alpha}$. 

The operators $\bm w_{\vec{t}}$ acting on $|\mathbf{j,m,n}\rangle$ leaves $\mathbf{j}$ invariant but can give linear combinations of $|\mathbf{j},\bf{m}',\bf{n}'\rangle$ with different $\bf{m}',\bf{n}'$. On the other hand, a generic $f(\bm{h}_\fl)$ acting on $|\mathbf{j,m,n}\rangle$ can give linear combinations of $|\mathbf{j}',\bf{m}',\bf{n}'\rangle$ with different $\bf{j}'$ (and different $\bf{m}',\bf{n}'$) but leaves $\alpha$ invariant. Therefore the actions of $(\bm{w}_{\vec{t}}f(\bm{h}_\fl)\otimes \bm{I}^{\rm (aux)}_{\bA,\alpha})\bm{P}_{(\mathbf{j},\bf{m},\bf{n})}$ followed by the projection $\bm{P}_{(\mathbf{j}',\bf{m}',\bf{n}')}$ and a rescaling equals to the partial isometry
\be
\bm{V}^{(\mathbf{j}',\mathbf{m}',\mathbf{n}')}_{\ (\mathbf{j},\mathbf{m},\mathbf{n})}=|\mathbf{j}',\mathbf{m}',\mathbf{n}'\rangle\langle \mathbf{j},\mathbf{m},\mathbf{n}|\otimes \bm{I}^{\rm (aux)}_{\bA,\alpha}
\ee
The partial isometry $\bm{V}^{(\mathbf{j}',\bf{m}',\bf{n}')}_{\ (\mathbf{j},\bf{m},\bf{n})}$ for any pair $(\mathbf{j}',\bf{m}',\bf{n}')$ and $(\mathbf{j},\bf{m},\bf{n})$ with fixed $\alpha$ can be obtained in this way. It belongs to $\cm_0$ since all ingredients belong to $\cm_0$ and they satisfy
\be
\bm{P}_{(\mathbf{j}',\mathbf{m}',\mathbf{n}')}=\bm{V}^{(\mathbf{j}',\mathbf{m}',\mathbf{n}')}_{\ (\mathbf{j},\mathbf{m},\mathbf{n})}\bm{V}^{(\mathbf{j}',\mathbf{m}',\mathbf{n}')}_{\ (\mathbf{j},\mathbf{m},\bf{n})}{}^\dagger,\qquad \bm{P}_{(\mathbf{j},\bf{m},\bf{n})}=\bm{V}^{(\mathbf{j}',\bf{m}',\bf{n}')}_{\ (\mathbf{j},\bf{m},\bf{n})}{}^\dagger \bm{V}^{(\mathbf{j}',\bf{m}',\bf{n}')}_{\ (\mathbf{j},\bf{m},\bf{n})}.
\ee
It means that all projections $\bm{P}_{(\mathbf{j},\bf{m},\bf{n})}$ are equivalent. Furthermore, any operator $\oplus_\alpha(\bm{O}_\alpha\otimes\bm{I}^{\rm (aux)}_{\bA,\alpha} )\in \cm^{\rm (aux)}$ has the expansion\footnote{The proof of this equality: the operators on the left and right hand sides have the same matrix elements on the spin-network states. The operator on the left hand side is bounded. Given a bounded operator $A$ and another operator $B$ on some Hilbert space $\ch$, and given a orthonormal basis $\{e_i\}$ in $\ch$, $\langle e_i|A|e_j\rangle=\langle e_i|B|e_j\rangle$ implies $A e_i=B e_i$ for all $i$, i.e. $A=B$ on a dense domain $D$ spanned by the basis. The bounded operator $A$ is defined on entire $\ch$, so $A=B$ implies that $B$ is at least densely defined on $D$. The B.L.T theorem indicates that $B$ has the unique bounded extension $\bar{B}$ and $A=\bar{B}$ on $\ch$.}
\be
\bigoplus_\alpha\lt(\bm{O}_\alpha\otimes\bm{I}^{\rm (aux)}_{\bA,\alpha} \rt)&=&\bigoplus_\alpha\sum _{(\mathbf{j},\mathbf{m},\mathbf{n})}\sum _{(\mathbf{j}',\mathbf{m}',\mathbf{n}')}|\mathbf{j}',\mathbf{m}',\mathbf{n}'\rangle \lt(O_\alpha\rt)_{(\mathbf{j}',\mathbf{m}',\mathbf{n}')}^{(\mathbf{j},\mathbf{m},\mathbf{n})}\langle \mathbf{j},\mathbf{m},\mathbf{n}|\otimes \bm{I}^{\rm (aux)}_{\bA,\alpha}\nonumber\\
&=&\bigoplus_\alpha\sum _{(\mathbf{j},\mathbf{m},\mathbf{n})}\sum _{(\mathbf{j}',\mathbf{m}',\mathbf{n}')}\bm{V}^{(\mathbf{j}',\mathbf{m}',\mathbf{n}')}_{\ (\mathbf{j},\mathbf{m},\mathbf{n})}\lt(O_\alpha\rt)_{(\mathbf{j}',\mathbf{m}',\mathbf{n}')}^{(\mathbf{j},\mathbf{m},\mathbf{n})},
\ee
where the possibly infinite sum on the right hand side is under the weak limit. Then $\oplus_\alpha (\bm{O}_\alpha\otimes\bm{I}^{\rm (aux)}_{\bA,\alpha} )\in \cm_0$ since all the partial isometries belongs to $\cm_0$. This shows $\cm^{\rm (aux)}\subset \cm_0$. Since $\cm_0\subseteq \cm^{\rm (aux)}$ by definition, we obtain $\cm_0 = \cm^{\rm (aux)}$.

\end{proof}

Let us remind some terminology of projectors in von Neumann algebra: Given two projectors $\bm{P},\bm{Q}$ in a von Neumann algebra $\cm$, if there exists a partial isometry $\bm{V}\in\cm$ such that $\bm{P}=\bm{V}^\dagger \bm{V}$ and $ \bm{Q}= \bm{V} \bm{V}^\dagger$, we say $\bm{P}$ and $\bm{Q}$ are equivalent: $\bm{P}\sim \bm{Q}$. A projection $\bm{P}\in\cm$ is finite, if every proper sub-projection $\bm{P}'\in \cm$, $\bm{P}' < \bm{P}$ is inequivalent to $\bm{P}$. $\bm{P}$ is infinite if there exists a proper sub-projection that is equivalent to $\bm{P}$. A finite projection is minimal if there is no proper sub-projection except zero.

A factor is type I if it contains a nonzero minimal projection operator. The von Neumann algebra $\cm^{\rm (aux)}_\alpha$ on $\ch^{\rm (aux)}_{A,\alpha}\otimes \ch^{\rm (aux)}_{\bA,\alpha}$ is a type I factor. The definition \eqref{cbHaux} is the standard form a type I factor \cite{Sorce:2023fdx}.  In $\cm^{\rm (aux)}_\alpha$, all $\bm{P}_{(\mathbf{j},\mathbf{m},\mathbf{n})}$ are minimal projections. There are infinitely many mutually orthogonal minimal projections, so $\cm^{\rm (aux)}_\alpha=\cm_0$ is of type $\mathrm{I}_\infty$.

We denote by $\bm{p}^{\rm inv}$ the projection from $\ch^{\rm (aux)}_\G$ to $\ch$ of gauge invariant states. $\bm{p}^{\rm inv}$ leaves $\alpha$ invariant and satisfies $\bm{p}^{\rm inv}=\bm{p}^{\rm inv}_A\otimes \bm{p}^{\rm inv}_{\bA}$ where $\bm{p}^{\rm inv}_{A}:\ch^{\rm (aux)}_{A,\alpha}\to\ch_{A,\alpha}$ and $\bm{p}^{\rm inv}_{\bA}:\ch^{\rm (aux)}_{\bA,\alpha}\to\ch_{\bA,\alpha}$. We define the von Neumann algebra $\cm$ of gauge invariant operators on $\ch$ by the projection of the algebra $\cm_0$ of the holonomies and fluxes:
\be
\cm:=\bm{p}^{\rm inv}\cm_0\bm{p}^{\rm inv}.
\ee

\begin{theorem}
$\cm$ is a type $I$ non-factor von Neumann algebra, and $\cm\simeq \oplus_\a\cb(\ch_{A,\alpha})\otimes \bm{I}_{\bA,\alpha}$.
\end{theorem}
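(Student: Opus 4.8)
The plan is to reduce the statement to the block decomposition and to the elementary compression fact that for a projection $\bm{p}$ on a Hilbert space $\H$ one has $\bm{p}\,\cb(\H)\,\bm{p}=\cb(\bm{p}\H)$ on the range. First I would record that all three ingredients are compatible with the $\alpha$-grading: the space $\ch^{\rm (aux)}=\bigoplus_\alpha \ch^{\rm (aux)}_{A,\alpha}\otimes\ch^{\rm (aux)}_{\bA,\alpha}$, the algebra established in the preceding lemma $\cm_0=\bigoplus_\alpha \cb(\ch^{\rm (aux)}_{A,\alpha})\otimes\bm{I}^{\rm (aux)}_{\bA,\alpha}$, and the gauge projection, which (since $\Fs$ contains no node) factorizes as $\bm{p}^{\rm inv}=\bigoplus_\alpha \bm{p}^{\rm inv}_{A,\alpha}\otimes\bm{p}^{\rm inv}_{\bA,\alpha}$ with ranges $\ch_{A,\alpha}$ and $\ch_{\bA,\alpha}$. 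Because these decompositions share the same grading, $\cm=\bm{p}^{\rm inv}\cm_0\bm{p}^{\rm inv}$ splits blockwise, and it suffices to analyze a single block.

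Second, I would compute one block explicitly. For a generic element $\bm{a}_\alpha\otimes\bm{I}^{\rm (aux)}_{\bA,\alpha}$ of $\cm_0$, the compression equals $(\bm{p}^{\rm inv}_{A,\alpha}\bm{a}_\alpha\bm{p}^{\rm inv}_{A,\alpha})\otimes\bm{p}^{\rm inv}_{\bA,\alpha}$, which, viewed as an operator on the range $\ch_{A,\alpha}\otimes\ch_{\bA,\alpha}$, is $\tilde{\bm{a}}_\alpha\otimes\bm{I}_{\bA,\alpha}$ with $\tilde{\bm{a}}_\alpha\in\cb(\ch_{A,\alpha})$, since $\bm{p}^{\rm inv}_{\bA,\alpha}$ restricts to the identity on its own range. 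Conversely, any $\bm{X}\in\cb(\ch_{A,\alpha})$ arises this way by extending $\bm{X}$ by zero to $\ch^{\rm (aux)}_{A,\alpha}$; this preserves the operator norm, so a bounded family $\{\bm{X}_\alpha\}$ assembles into a genuine element of $\cm_0$. Hence, as operators on $\ch$, one obtains exactly $\cm=\bigoplus_\alpha \cb(\ch_{A,\alpha})\otimes\bm{I}_{\bA,\alpha}$, which is the claimed isomorphism.

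Third, once the set equality is in hand, the von Neumann property is automatic: the right-hand side is the $\ell^\infty$ direct sum of the type I factors $\cb(\ch_{A,\alpha})\otimes\bm{I}_{\bA,\alpha}$, a weakly closed unital $*$-subalgebra of $\cb(\ch)$ containing $\bm{I}_\ch$. Type I then follows by exhibiting minimal projections — rank-one projections on $\ch_{A,\alpha}$ tensored with $\bm{I}_{\bA,\alpha}$. To show it is a non-factor I would compute the commutant $\cm'=\bigoplus_\alpha \bm{I}_{A,\alpha}\otimes\cb(\ch_{\bA,\alpha})$ — first noting that any element of $\cm'$ commutes with the central block projections $\bm{z}_\alpha$ onto $\ch_{A,\alpha}\otimes\ch_{\bA,\alpha}$ and is therefore block diagonal, then applying the tensor-product commutation theorem within each block — and read off $\cz_\cm=\cm\cap\cm'=\bigoplus_\alpha \C\,\bm{z}_\alpha$, which is nontrivial whenever more than one sector $\alpha$ is present (the puncture-area operators $\bm{\Ar}_{\fl_0}$ sit inside this center).

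The \emph{main obstacle} I anticipate is exactly the closure-under-multiplication issue: $\bm{p}^{\rm inv}$ lies in neither $\cm_0$ nor its commutant, so the textbook reduction/induction theorem guaranteeing that a compression is again a von Neumann algebra does not apply off the shelf, and a naive compression $\bm{p}\cm_0\bm{p}$ need not even be an algebra. The way around it is to exploit the factorized structure $\bm{p}^{\rm inv}=\bigoplus_\alpha \bm{p}^{\rm inv}_{A,\alpha}\otimes\bm{p}^{\rm inv}_{\bA,\alpha}$ together with the fact that $\cm_0$ acts trivially on the $\bA$-tensor factor: the $\bA$-side compression collapses to the identity on each range, so products of compressed operators remain in $\cb(\ch_{A,\alpha})\otimes\bm{I}_{\bA,\alpha}$ and in fact fill out all of $\cb(\ch_{A,\alpha})$ per block. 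Weak closure is then inherited from the manifestly closed direct sum rather than argued directly, which is what makes the whole identification go through cleanly.
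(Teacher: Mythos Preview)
Your proposal is correct and follows essentially the same route as the paper: use the preceding lemma $\cm_0=\bigoplus_\alpha\cb(\ch^{\rm (aux)}_{A,\alpha})\otimes\bm{I}^{\rm (aux)}_{\bA,\alpha}$, factorize $\bm{p}^{\rm inv}$ blockwise as $\bm{p}^{\rm inv}_A\otimes\bm{p}^{\rm inv}_{\bA}$, and invoke the compression identity $\bm{p}^{\rm inv}_A\,\cb(\ch^{\rm (aux)}_{A,\alpha})\,\bm{p}^{\rm inv}_A=\cb(\ch_{A,\alpha})$. The paper's own proof is a one-line application of exactly these two facts (with the surjectivity of the compression handled in a footnote by extending an operator on $\ch_{A,\alpha}$ by the identity on the orthogonal complement rather than by zero as you do), so your more careful treatment of closure under multiplication, the commutant, and the non-factor center simply fleshes out what the paper leaves implicit.
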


\begin{proof}

This follows from $\cm_0=\cm^{\rm(aux)}=\oplus_\alpha\lt(\cb(\ch^{\rm (aux)}_{A,\alpha})\otimes \bm{I}^{\rm (aux)}_{\bA,\alpha}\rt)$ and $\bm{p}^{\rm inv}_A\cb(\ch^{\rm (aux)}_{A,\alpha})\bm{p}^{\rm inv}_A =\cb(\ch_{A,\alpha})$ 
\footnote{
Obviously $\bm{p}^{\rm inv}_A\cb(\ch^{\rm (aux)}_{A,\alpha})\bm{p}^{\rm inv}_A \subseteq \cb(\ch_{A,\alpha})$. Moreover, given that $\ch^{\rm (aux)}_{A,\alpha}=\oplus_{I}\ch_{A,\alpha}^{(I)}$ where each $I$ is a collection of total angular momentum at all $\fn\in A$ and $\ch_{A,\alpha}=\ch_{A,\alpha}^{(0)}$, consider a subalgebra $\oplus_I\cb(\ch_{A,\alpha}^{(I)})$ in $\cb(\ch^{\rm (aux)}_{A,\alpha})$, any $\bm\co\in \cb(\ch_{A,\alpha})=\cb(\ch^{(0)}_{A,\alpha})$ can be written as $\bm\co=\bm{p}^{\rm inv}_A(\oplus_I \bm\co_I)\bm{p}^{\rm inv}_A$, where $\oplus_I \bm\co_I\in \oplus_I\cb(\ch_{A,\alpha}^{(I)})\subset \cb(\ch^{\rm (aux)}_{A,\alpha})$ and $\bm\co_0=\bm\co$, $\bm\co_{I\neq 0}=\bm I$, so $\cb(\ch_{A,\alpha})\subseteq \bm{p}^{\rm inv}_A\cb(\ch^{\rm (aux)}_{A,\alpha})\bm{p}^{\rm inv}_A $.
}.

\end{proof}

\subsection{Renormalized trace, density operator, and entanglement entropy}

Given any von Neumann algebra $\cm$ and its subset of positive operators $\cm_+$, a trace on $\cm$ is a map $\hat{\Tr}: \cm_+\to [0,\infty]$ satisfying (1) $\hat{\Tr}(\l \bm T) =\l \hat{\Tr}(\bm T)$ and $\hat{\Tr}(\bm T+\bm S)=\hat{\Tr}(\bm T)+\hat{\Tr}(\bm S)$ for all $\bm T,\bm S\in\cm_+$ and all $\l\geq 0$, (2) $\hat{\Tr}(\bm U \bm T \bm U^*)=\hat{\Tr}(\bm T)$ for all $\bm T\in\cm_+$ and all unitary $\bm U\in\cm$. These defining properties implies the trace satisfies the cyclic property $\hat{\Tr}(\bm S \bm T)=\hat{\Tr}(\bm T \bm S)$ when extending to trace-class operators. The trace $\hat{\Tr}$ is renormalized if it satisfy (1) $\hat{\Tr}(\bm T)=0$ implies $\bm T=0$, (2) $\hat{\Tr}(\bm P)$ is finite for every finite projector $\bm P$, (3) If $\{\bm \rho_I\}$ is a family of positive operators in $\cm_+$ with a supremum $\bm \rho = \mathrm{sup}_I \bm \rho_I$, then we
have $\hat{\Tr}(\bm \rho) = \mathrm{sup}_I\hat{\Tr}(\bm \rho_I)$. Every type I von Neumann factor has a unique renormalized trace up to rescaling $\hat{\Tr}\to \l \hat{\Tr}$, $\forall\l>0$. The renormalized trace is determined by setting every minimal projector $\bm{P}$ to have $\hat{\Tr}(\bm P)=\l$. See \cite{Sorce:2023fdx} for a detailed discussion on the renormalized trace and the uniqueness.

In our case of $\cm=\oplus_\a \cm_\a$ where $\cm_\a$ are type-I factors, for $\bm \co=\oplus_\a\bm \co_\a\in\cm,\  \bm \co_\a=\bm \co_{A,\a}\otimes \bm I_{\bA,\a}\in\cm_\a$ such that $\bm \co_{A,\a}$ is in trace-class, we define the renormalized trace is by
\be
\hat{\Tr}(\bm \co)=\sum_\a \hat{\Tr}_\a(\bm \co_\a), \quad \text{where}\quad \hat{\Tr}_\a(\bm \co_\a)=\l_\a \Tr_{A,\a} (\bm \co_{A,\a}),\quad \l_\a>0. \label{hatTr84}
\ee
Intuitively, $\l_\a$ ``renormalizes'' the infinity $\Tr_{\bA,\a}(\bm I_{\bA,\a})$. For the type-I factor $\cm_\a=\cb(\ch_{A,\a})\otimes \bm I_{\bA,\a}$, the minimal projection is generally given by $\bm P_\a=\bm P_{A,\a}\otimes \bm I_{\bA,\a}$ where $\bm P_{A,\a}$ projects $\ch_{A,\a}$ to some 1-dimensional subspace. Therefore,
\be
\hat{\Tr}_\a(\bm P_\a)=\l_\a.\label{hatTr}
\ee
$\hat{\Tr}_\a$ defines the renormalized trace on $\cm_\a$ and is unique up rescaling $\l_\a$. 


The definition of the entanglement entropy is a generalization of the discussion in Section \ref{finitedimcase}. Given any density operator $\bm{\rho}\in \cb(\ch)$ normalized by $\Tr(\bm{\rho})=1$ with the standard trace on $\ch$. The diagonal block $\bm{\rho}_{\a\a}$ is a bounded operator on $\ch_{A,\a}\otimes \ch_{\bA,\a}$. We take the partial trace on $\ch_{\bA,\a}$ and define
\be
\bm{\rho}_{A,\a}=p_\a^{-1}\Tr_{\bA,\a}(\bm{\rho}_{\a\a})
\ee
where $p_\a$ is the normalization constant such that $\Tr_{A,\a}(\bm{\rho}_{A,\a})=1$. $\Tr(\bm{\rho})=1$ implies $\sum_\a p_\a =1$. We define the density operator $\hat{\bm \rho}_\cm\in\cm$ by
\be
\hat{\bm \rho}_\cm=\bigoplus_\a p_\a \l_\a^{-1} \hat{\bm \rho}_{\cm,\a},\qquad \hat{\bm \rho}_{\cm,\a}=\bm{\rho}_{A,\a}\otimes \bm{I}_{\bA,\a},
\ee
and it satisfies for any $\bm \co=\oplus_\a (\bm \co_{A,\a}\otimes\bm I_{\bA,\a} )\in\cm$,
\be
\Tr(\bm{\rho}\bm \co)=\sum_\a p_\a\Tr_{A,\a} \lt(\bm{\rho}_{A,\a}\bm\co_{A,\a}\rt)=\sum_\a p_\a\lambda_\a^{-1} \hat{\Tr}_\a\lt (\hat{\bm \rho}_{\cm,\a}\bm\co_\a\rt)=\hat{\Tr}\lt(\hat{\bm \rho}_\cm\bm \co\rt).
\ee
The entanglement entropy of $\bm \rho$ is given by
\be
S(\bm \rho,A)=-\hat{\Tr}\lt(\hat{\bm \rho}_\cm\log \hat{\bm \rho}_\cm\rt)=-\sum_\alpha p_\alpha \log p_\alpha +\sum_\alpha p_\alpha \lt[S(\bm{\rho}_{A,\alpha})+\log\lambda_{\alpha}\rt],\label{entropyformula2}
\ee
where the von Neumann entropy $S(\bm{\rho}_{A,\alpha})$ is defined with the standard trace $\mathrm{Tr}$ on $\ch_{A,\a}\otimes\ch_{\bA,\a}$. Eq.\eqref{entropyformula2} generalizes \eqref{entropyformula} to the infinite-dimensional Hilbert space $\ch$. There exists ambiguity in the definition of entanglement entropy due to the choices of $\l_\a$ in defining the trace. The formula \eqref{entropyformula} corresponds to the choice $\l_\a=1$ for all $\a$.

The sum $\sum_\a$ is an infinite sum in \eqref{entropyformula2}. Since the finite-dimensional Hilbert space $\hat{\ch}$ is a proper subspace of $\ch$, every state that discussed above can be seen as a state in $\ch$. The corresponding density operator only has a finite number of nonzero $p_\a$. Therefore its entanglement entropy computed in $\ch$ using \eqref{entropyformula2} is the same as the one computed in $\hat{\ch}$ using \eqref{entropyformula} for $\l_\a=1$. The ambiguity from $\l_\a$ shifts the bulk entropy $S(\bm{\rho}_{A,\alpha})\to S(\bm{\rho}_{A,\alpha})+\log\l_\a$.


\subsection{Hidden sector Hilbert space}\label{Hidden sector Hilbert space}

In the context of the type I von Neumann algebra from gravitational path integrals \cite{Colafranceschi:2023moh}, every $\l_\a$ is an arbitrary integer interpreted as the dimension of a ``hidden sector Hilbert space'' that is not seen by the von Neumann algebra. The situation is similar in the case of LQG Hilbert space or lattice gauge theory in general: There exists a canonical choice of $\l_\alpha$ being the dimension of the representation of the gauge group at the punctures $\Fs\cap\{\fl_0\}$. Indeed, if we let $\Fs$ to cut every $\fl_0$ into half-links $\fl^{(A)}_0,\fl^{(\bA)}_0$ and do not impose the gauge invariance at the cut, there can be states having two different spins on the pair of half-links: $j_{\fl^{(A)}_0}\neq j_{\fl^{(\bA)}_0}$. The (enlarged) Hilbert space $\ch_{0}$ including these states is a tensor product:
\be
\ch_{0}=\ch_{A}\otimes\ch_{\bA}.
\ee
The Hilbert spaces $\ch_A$ and $\ch_{\bA}$ contains all spin-network states in $A$ and $\bA$ respectively without imposing the gauge invariance on the boundary $\Fs$:
\be
\ch_A=\bigoplus_\a \ch_{A,\a} \otimes V_\a,\qquad \ch_\bA=\bigoplus_\a \ch_{\bA,\a} \otimes V_\a^*,\qquad V_{\a=\{j_{\fl_0}\}} =\bigotimes_{\fl_0}V_{j_{\fl_0}}
\ee 
where $V_j$ is the spin-$j$ irrep of SU(2). The ``hidden sector Hilbert space'' $ V_\a$ carrying the representation of the gauge transformations on the entangling surface  $\Fs$. The states in the hidden sector Hilbert space can be understood as the edge modes at $\Fs$, relating to the discussion in \cite{Lin:2018bud}.

Edge modes represent the degrees of freedom that would typically be unphysical gauge choices but become genuine, observable physical states due to the observer's limited access to the region beyond the horizon \cite{Balachandran:1994up,Carlip:1994gy,LeighEdgeModes}. Intuitively, the boundary $\Fs=\partial A$ is the ``horizon'' for an interior observer who only sees the observables in $\cm$. The presence of the boundary means the observer cannot perform observations or measurements that would verify the gauge constraint holds true across the entire spacetime, particularly in the unobservable region beyond $\Fs$. Since the observer cannot enforce the gauge constraint at $\Fs$, the degrees of freedom associated with the gauge redundancy at this boundary--the edge modes--should not be eliminated. Instead, they should be promoted to physical degrees of freedom that encode information about the region beyond the observer's reach.


The Hilbert space $\ch$ of gauge invariant states only contains states with $j_{\fl^{(A)}_0} = j_{\fl^{(\bA)}_0}$, so $\ch$ can be embedded as a proper subspace of $\ch_0$. A normalized state $|\psi_\a\rangle\in\ch_{A,\a}\otimes\ch_{\bA,\a}$ is equivalent to\footnote{$\Psi_\alpha$ adds bi-valent intertwiners to the virtual nodes at the punctures. $\Psi_\a$ is equivalent to $\psi_\a$ due to the cylindrical consistency.}
\be
|\Psi_\a\rangle=|\psi_\a\rangle\bigotimes_{\fl_0}\lt(\sum_{m_{\fl_0}}\frac{\mid j_{\fl_0},m_{\fl_0}\rangle_A \mid j_{\fl_0},m_{\fl_0}\rangle_{\bA}}{\sqrt {2j_{\fl_0}+1}}\rt)\in\ch_0\ .
\ee 
where $| j_{\fl_0},m_{\fl_0}\rangle_A\in V_{j_{\fl_0}}$ and $| j_{\fl_0},m_{\fl_0}\rangle_{\bA}\in V_{j_{\fl_0}}^*$, and $\Psi_\a$ is gauge invariant and normalized in $\ch_0$. 

Any gauge invariant operator $\bm\co_\a=\bm\co_{A,\a}\otimes\bm I_{\bA,\a}$ with $\a=\{j_{\fl_0}\}$ corresponds to an operator $\bm\co_{A,\a}\otimes \bm I_{V_\a}$ on $\ch_A$, where $\bm I_{V_\a}$ is the identity operator on $V_\a$. This correspondence maps the von Neumann algebra $\cm$ into a subalgebra in $\cb(\ch_A)$. In particular, the minimal projection $\bm P_\a$ onto 1-dimension subspace $V_0$ in $\ch_{A,\a}$ is mapped to the projection $\bm P_\a\otimes \bm I_{V_\a}$ onto $V_0\otimes V_\a$ on $\ch_A$. Denote by $\mathrm{tr}_A$ the standard trace on $\ch_A$, we obtain
\be
\mathrm{tr}_A\lt(\bm P_\a\otimes \bm I_{\{j_{\fl_0}\}}\rt)=\dim(V_\a)=\prod_{\fl_0}(2j_{\fl_0}+1).
\ee
Therefore, the renormalized trace $\hat{\Tr}_\a(\bm\co_\a)$ defined in \eqref{hatTr84} and \eqref{hatTr} equals to the standard trace $\mathrm{tr}_A\lt(\bm\co_{A,\a}\otimes \bm I_{V_\a}\rt)$ on $\ch_A$ when 
\be
\l_\a=\prod_{\fl_0}(2j_{\fl_0}+1).\label{canonChoice}
\ee
This fixes the ambiguity $\l_\a$ in the definition of the renormalized trace $\hat{\Tr}$.

We denote the standard trace on $\ch_0$ by $\mathrm{tr}$. The gauge invariant state $\bm\rho=\sum_{\a,\a'}p^{1/2}_\a p^{1/2}_{\a'}|\psi_\a\rangle\langle\psi_{\a'}|$ on $\ch$ is equivalent to $\rho=\sum_{\a,\a'}p^{1/2}_\a p^{1/2}_{\a'}|\Psi_\a\rangle\langle\Psi_{\a'}|$ on $\ch_0$. The reduced density matrix is given by
\be
\rho_A=\mathrm{tr}_{\bA}\rho=\bigoplus_\a p_\a\sum_{i}\langle \a, i_{\bA}\mid\Psi_\a\rangle \langle\Psi_\a\mid \a, i_{\bA}\rangle=\bigoplus_\a \lt[ p_\a \l_\a^{-1}\bm \rho_{A,\a}\otimes \bm I_{V_\a}\rt].
\ee
where $i_{\bA}$ labels the orthonormal basis in $\ch_{\bA,\a}\otimes V_\a^*$, so we have the correspondence between $\hat{\bm\rho}_{\cm}=\oplus_\a \lt[p_\a \l_\a^{-1}\bm\rho_{A,\a}\otimes \bm I_{\bA,\a}\rt]$ and $\rho_A$. $S(\bm\rho,A)$ in \eqref{entropyformula2} equals to the standard von Neumann entropy on $\ch_0$ due to the correspondence between $\hat{\Tr}$ and $\mathrm{tr}$:
\begin{eqnarray}
S(\bm\rho,A)=-\hat{\Tr}\lt(\hat{\bm \rho}_\cm\log \hat{\bm \rho}_\cm\rt)=-\mathrm{tr}_A(\rho_A\log \rho_A)=-\sum_\alpha p_\alpha \log p_\alpha +\sum_\alpha p_\alpha \lt[S(\bm{\rho}_{A,\alpha})+\sum_{\fl_0}\log\lt(2j_{\fl_0}+1\rt)\rt].\label{entropywithdim}
\end{eqnarray}
 
The term $\sum_{\fl_0}\log\lt(2j_{\fl_0}+1\rt)$ in $S(\bm\rho, A)$ coming from the canonical choice of $\l_\a$ is precisely the entanglement entropy obtained in \cite{Donnelly:2008vx}. Indeed, given any gauge invariant spin-network state with all the spins fixed, we have $p_\a=1$ only for one $\a$, and $S(\bm{\rho}_{A,\alpha})=0$ since the state is a pure tensor product of intertwiners, so only the term $\sum_{\fl_0}\log\lt(2j_{\fl_0}+1\rt)$ survives in $S(\bm\rho, A)$.

The embedding $\ch\subset \ch_0$ and the canonical choice \eqref{canonChoice} have the physical interpretation in terms of quantum error correction code. The detailed discussion of this aspect is given in \cite{HanTobin}.

We repeat the analysis in Section \ref{Fixed-area state and geometrical area law} with the entropy formula \eqref{entropywithdim} for the state \eqref{Psisum1} with $S(\bm{\rho}_{A,\alpha})=0$. The modification is only to change the degeneracy from $G(j)=1$ in \eqref{Dstatecount} to $G(j)=2j+1$ (see also \eqref{factorbulk} and the discussion there, since $\sum_{\fl_0}\log\lt(2j_{\fl_0}+1\rt)$ plays the same role as the bulk entropy discussed there). The entanglement entropy can be computed by using $P(k)=\prod_{k\neq 0}(k+1)^{n_k}$ in the state-counting \eqref{cn(a)}. The result is given by the area law with a different coefficient:
\be
S(\bm{\rho}_c,A) =  \frac{\b}{8\pi\g \ell_P^2}\Ar(\Fs)+O(1),\label{eq104}
\ee
where $\b/(2\pi)=0.274067\cdots$ is given by 
\be
\sum_{k=1}^{\infty}\left(k+1\right)e^{-(\b/2)\sqrt{\left(k+1\right)^{2}-1}}=1,
\ee
the same as $\b=2s_0$ in \eqref{blackholeBetavalue} and the one for the SU(2) black hole entropy.

The state-counting with $P(k)=\prod_{k\neq 0}(k+1)^{n_k}$ here does not impose any constraint on the spin profile other than the area constraint. In contrast, $P(k)$ in \eqref{eq75} imposes both the triangle inequality and the constraint that the sum of all $k$'s is an even number are imposed. Moreover, as the dimension of the intertwiner space, $P(k)$ in \eqref{eq75} is generally less than $\prod_{k\neq 0}(k+1)^{n_k}$. However, these two situations gives exactly the same leading order area-law from the state-counting. The constraints and having less states in the intertwiner space only affect the logarithmic correction and $O(1)$ contribution. In the case that $A$ contains many nodes and $\bm\rho_c$ has the factorized bulk state as in \eqref{Psisum1}, the entanglement entropy corresponds to the state-counting \eqref{cn(a)} with $P(k)$ in between the above two cases, i.e., the sum of all $k$'s still has to be an even number, while the triangle inequality is alleviated, and $P(k)=\prod_{k\neq 0}(k+1)^{n_k}$ for the allowed $k$'s. Since the number of states is in between the above two cases, the resulting entanglement has the same leader order area-law as in \eqref{eq104}.

\section*{Acknowledgments}

The author acknowledges some useful discussions with Hongguang Liu, Qiaoyin Pan, and Sean Tobin. The author receives supports from the National Science Foundation through grants PHY-2207763 and PHY-2512890 and from a sponsorship of renewed research stay in Erlangen from the Humboldt Foundation.

\appendix

\section{Entanglement entropy of random state}\label{random average}

Given the finite-dimensional Hilbert space $\ch=\ch_A\otimes \ch_{\bA}$ and the random state 
\be
\bm{\rho}=|\psi_{A,\bA}\rangle \langle \psi_{A,\bA}|=\cu |\psi^{(0)}_{A,\bA}\rangle\langle \psi^{(0)}_{A,\bA}|\cu^\dagger , 
\ee
the second Renyi entropy $S_2(\bm \rho_A)=-\log [\Tr_A(\bm \rho_A^2)/(\Tr_A\bm \rho_A)^2]$ reformulated by using the ``swap trick''
\be
S_2(\bm \rho_A)=-\log\frac{Z_1}{Z_0}, \qquad Z_1=\Tr\lt[(\bm \rho\otimes\bm \rho)\cf_A\rt],\qquad Z_0=\Tr\lt[\bm \rho\otimes\bm \rho\rt],
\ee
where $\Tr$ is the trace over $\ch_A\otimes \ch_{\bA}$, and $\cf_A$ is the swap operator on $(\ch_A\otimes \ch_{\bA})^{\otimes 2}$: $\cf_A$ swaps the states of the two copies in the region A, i.e. $\cf_A |\psi_A\rangle\otimes| \phi_{\bA}\rangle \otimes |\psi'_A\rangle\otimes |\phi'_{\bA}\rangle=|\psi_A'\rangle\otimes| \phi_{\bA}\rangle \otimes |\psi_A\rangle\otimes |\phi'_{\bA}\rangle$. The random average $\langle S_2(\bm \rho_A) \rangle := \int\rmd \cu\ S_2(\bm \rho_A)$ can be approximated by averaging separately $Z_1$ and $Z_0$
\be
\langle S_2(\bm \rho_A) \rangle\simeq -\log\frac{\langle Z_1\rangle}{\langle Z_0\rangle }.\label{averageZZ}
\ee
The validity of this formula will be proven in a moment.

We first apply \eqref{averageZZ} to computing the averaged entropy and use the following formula to compute $\langle Z_1\rangle$ and $\langle Z_0\rangle$, 
\be
\langle\bm{\rho}\otimes\bm{\rho}\rangle=\int \rmd\cu \, \bm{\rho}\otimes\bm{\rho}=\frac{\ci+\cf}{d_\ch+d^2_\ch},\label{averagerhorho}
\ee
where $\rmd\cu$ is the Haar measure, and $\ci$ and $\cf$ are identity operator and swap operator on $\ch\otimes\ch$, i.e.
\be
\ci|\psi\rangle \otimes|\phi\rangle=|\psi\rangle \otimes|\phi\rangle,\qquad \cf|\psi\rangle \otimes|\phi\rangle=|\phi\rangle \otimes|\psi\rangle,\qquad \psi,\phi \in\ch
\ee
We denote by $d_A=\dim (\ch_A)$, $d_{\bA}=\dim(\ch_{\bA})$, and $d_{\ch}=d_A d_\bA=\dim(\ch)$. FIG.\ref{graphicComp} introduces some graphic notations for the random state $|\psi_{A,\bA}\rangle\in\ch_A\otimes\ch_\bA$ and the density matrix$\bm{\rho}$, as well as the formula \eqref{averagerhorho} in terms of the graphic notations. FIG.\ref{Z1Z0} applies the graphic notations to the computations of the averages $\langle Z_1\rangle$ and $\langle Z_0\rangle$. As a result, we obtain that for $d_A,d_\bA\gg 1$,
\be
\langle S_2(\bm \rho_A) \rangle\simeq -\log\lt(\frac{1}{d_A}+\frac{1}{d_\bA}\rt)\simeq \log \lt[\mathrm{min}\{d_A,d_\bA\}\rt].
\ee
The von Neumann entropy is lower bounded by the second Renyi entropy: $S(\bm \rho_A)\geq S_2(\bm \rho_A)$, and $\log \lt[\mathrm{min}(d_A,d_\bA)\rt]$ is already the maximal value of the entanglement entropy on $\ch_A\otimes \ch_\bA$, so we obtain
\be
\langle S(\bm \rho_A) \rangle \simeq \log \lt[\mathrm{min}(d_A,d_\bA)\rt].
\ee

\begin{figure}[h]
\centering
    \includegraphics[scale=0.4]{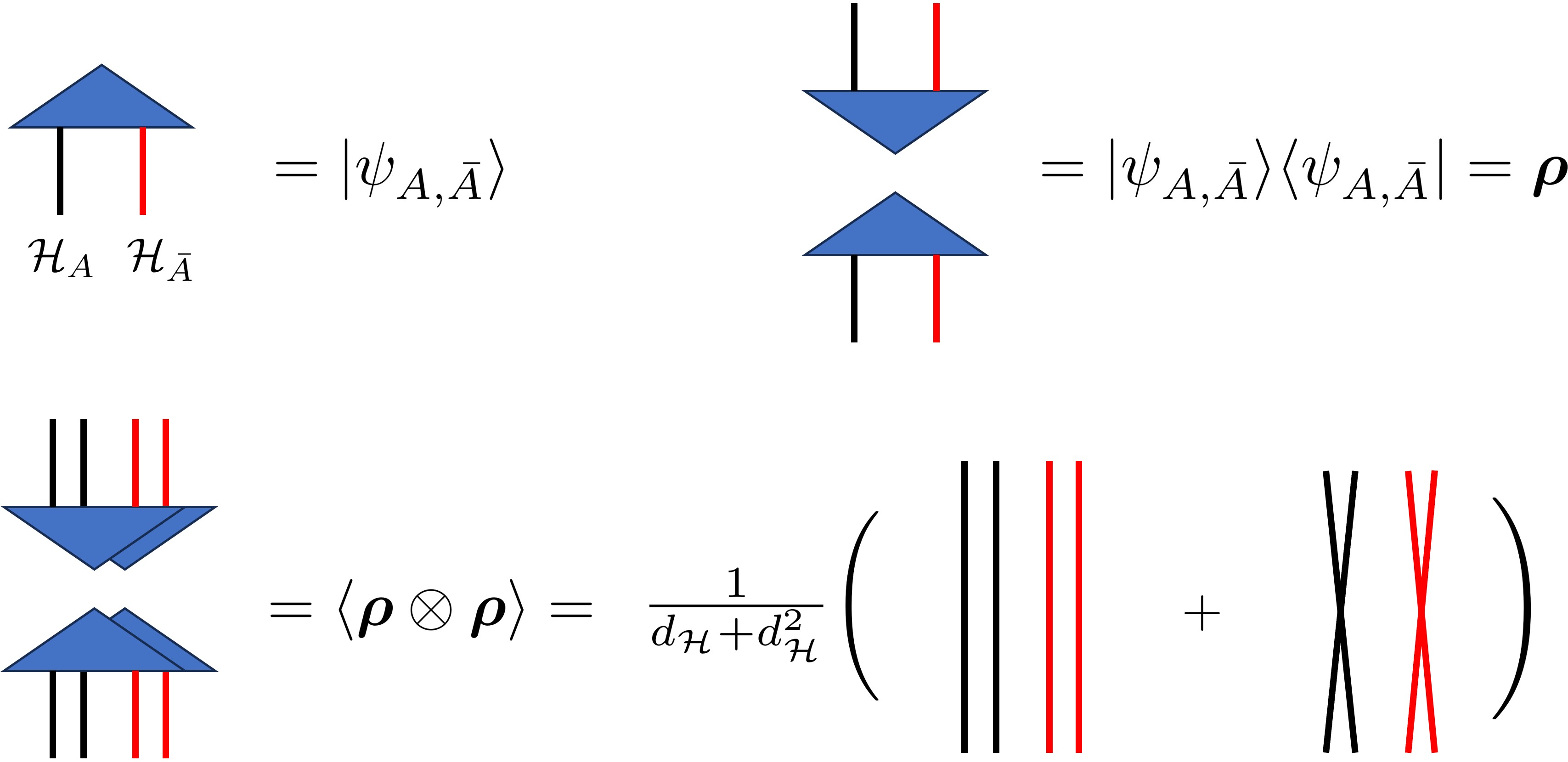}
    \caption{Graphic notations for $|\psi_{A,\bA}\rangle$, $\bm{\rho}$, and \eqref{averagerhorho}.}
    \label{graphicComp}
\end{figure}

\begin{figure}[h]
\centering
    \includegraphics[width=1\textwidth]{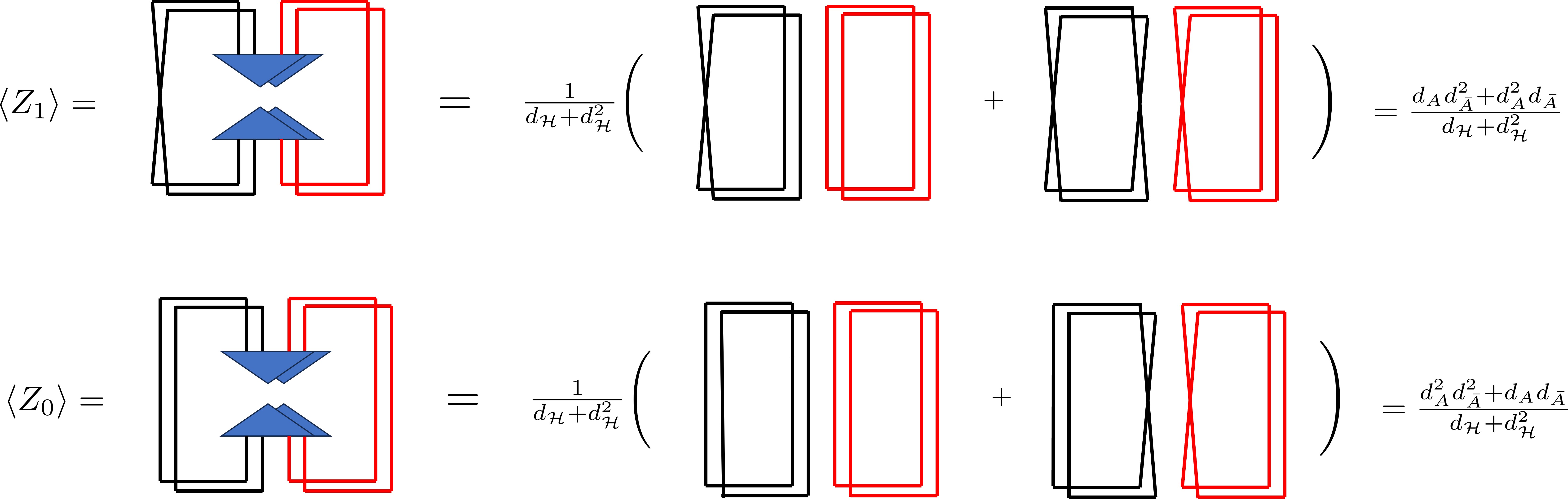}
    \caption{The graphic computations of the averages $\langle Z_1\rangle$ and $\langle Z_0\rangle$.}
    \label{Z1Z0}
\end{figure}

Now we show that \eqref{averageZZ} is indeed valid, namely, $S_2(\bm{\rho})$ concentrates at $\langle S_2(\bm{\rho})\rangle$ computed from \eqref{averageZZ} with a high probability for large dimension. The idea of derivation is similar to  \cite{Qi1}. We consider the fluctuation
\be
\frac{\langle(Z_1-\langle Z_1\rangle)^2\rangle}{\langle Z_1\rangle^2}=\frac{\langle Z_1^2 \rangle}{\langle Z_1\rangle^2}-1.
\ee
We compute the average $\langle Z_1^2 \rangle=\langle \Tr[(\bm \rho\otimes\bm \rho)\cf_A]^2 \rangle$ by using
\be
\langle\bm{\rho}^{\otimes 4}\rangle=\int \rmd\cu \, \bm{\rho}^{\otimes 4}=\frac{1}{\cc_4}\sum_{\sig\in{\rm Sym}_4} \sig ,\label{averagerho4}
\ee
where $\cc_m=(d_\ch+m-1)!/(d_\ch-1)!$, and the sum is over all permutations $\sig$ acting on $\ch^{\otimes 4}$. If we denote by $\L=(d_A^{-1}+d_\bA^{-1})^{-1}$, we obtain that
\be
\frac{\langle Z_1^2 \rangle}{\langle Z_1\rangle^2}-1=O(\L^{-1}).
\ee
By Markov's inequality
\be
{\rm Prob}\lt(\lt|\frac{Z_1}{\langle Z_1\rangle}-1\rt|\geq \frac{\delta}{4}\rt)\leq \frac{\lag\lt(\frac{Z_1}{\langle Z_1\rangle}-1\rt)^2\rag}{(\delta/4)^2}=O\lt(\delta^{-2}\L^{-1}\rt)
\ee
A similar bound holds for ${\rm Prob}\lt(\lt|\frac{Z_0}{\langle Z_0\rangle}-1\rt|\geq \frac{\delta}{4}\rt)$. The mean that with the probability of at least $1-O\lt(\delta^{-2}\L^{-1}\rt)$, we have $\lt|\frac{Z_a}{\langle Z_a\rangle}-1\rt|\leq \frac{\delta}{4}$, $a=0,1$. Then
\be
\lt|S_2(\bm \rho_A) - \langle S_2(\bm \rho_A)\rangle \rt|&=&\lt|\log \frac{Z_1}{Z_0}-\log \frac{\langle Z_1\rangle}{\langle Z_0\rangle }\rt|=\lt|\log \frac{Z_1}{\langle Z_1\rangle}-\log \frac{Z_0}{\langle Z_0\rangle }\rt|\nonumber\\
&\leq &\lt|\log \frac{Z_1}{\langle Z_1\rangle}\rt|+\lt|\log \frac{Z_0}{\langle Z_0\rangle }\rt|\leq \delta
\ee
For any small $\delta >0$, there is a large probability $1-O\lt(\delta^{-2}\L^{-1}\rt)$ for suitably large $d_A$ or $d_\bA$, such that $\lt|S_2(\bm \rho_A) - \langle S_2(\bm \rho_A)\rangle \rt|\leq \delta$.

\bibliographystyle{jhep}
\bibliography{muxin}

\end{document}